\newif\ifcolourfigs
\newcommand{\F}{{\ensuremath {\mathcal F}}}
\newcommand{\G}{{\ensuremath {\mathcal G}}}
\newcommand{\GG}{{\ensuremath {\mathcal G}{\mathcal G}}}
\newcommand{\numplayers}{k}
\newcommand{\numstrats}{n}
\newcommand{\pindex}{i}
\newcommand{\aindex}{j}
\newcommand{\msprob}{x}
\newcommand{\cube}{K}
\newcommand{\Poly}{{\bf P}}
\newcommand{\NP}{{\bf NP}}
\newcommand{\PPAD}{{\bf PPAD}}
\newcommand{\PPA}{{\bf PPA}}
\newcommand{\PPP}{{\bf PPP}}
\newcommand{\PSPACE}{{\bf PSPACE}}
\newcommand{\hide}[1]{}
\begin{document}

\def\appxnash{{\epsilon}}
\def\pr{\text{Pr}}
\newlength{\boxwidth}
\setlength{\boxwidth}{\linewidth}
\addtolength{\boxwidth}{-1em}

\makebcctitle

\begin{abstract}
\PPAD\ refers to a class of computational problems for which solutions
are guaranteed to exist due to a specific combinatorial principle. The
most well-known such problem is that of computing a Nash equilibrium of
a game. Other examples include the search for market equilibria, and
envy-free allocations in the context of cake-cutting. A problem is said
to be complete for \PPAD\ if it belongs to \PPAD\ and can be shown to
constitute one of the hardest computational challenges within that class.

In this paper, I give a relatively informal overview of the proofs used
in the \PPAD-completeness results. The focus is on the mixed Nash
equilibria guaranteed to exist by Nash's theorem. I also give an overview
of some recent work that uses these ideas to show
\PSPACE-completeness for the computation of specific equilibria found
by homotopy methods. I give a brief introduction to related problems of
searching for market equilibria.
\end{abstract}
\thankyou{Currently supported by EPSRC Grant EP/G069239/1}%


\hide{
\section{Polynomial-time computation}

One of the great achievements of theoretical Computer Science was the
identification of ``polynomial-time computation'' as a mathematically
precise notion of an efficient algorithm. In assessing the performance
of an algorithm, we consider the way its runtime scales as a function
of the size of an input. If $n$ measures the size of an input, an algorithm
runs in polynomial time provided that its runtimes is proportional to
$n$, or $n^2$, or $n^3$, or generally at most $n^k$ for some constant $k$.
On the other hand, a runtime proportional to (say) $2^n$ is not polynomial-time.

The great thing about the definition is that it is robust to changes in the
model of computation --- in moving from (say) Turing machine to unlimited
register machine, the degree of the polynomial may change, but we would
still have a polynomial bound on the runtime.
}

\section{Total Search Problems}

Suppose that you enter a maze without knowing anything in advance about
its internal structure. Let us assume that it has only one entrance. To
solve the maze, we do not ask to find some central chamber whose
existence has been promised by the designer. Instead you need to find
either a dead end, or else a place where the path splits, giving you a
choice of which way to go; see Figure~\ref{fig:mazes}. Observe that this
kind of solution is guaranteed to exist, and does not require any kind
of promise. This is because, if there are no places where the explorer
has a choice, then the interior of the maze (at least, the parts accessible from
the entrance) is, topologically, a single
path leading to a dead end. If you are asked to find either a dead end
or a split of a path, this is informally an example of a (syntactic)
{\em total search problem} --- the problem description has been set up
so as to guarantee that there exists a solution; the word ``total''
refers to the fact that {\em every} problem instance has a solution.

A maze of the sort in Figure~\ref{fig:mazes} could of course be solved
in linear time by checking each location. Consider now a more
challenging problem defined as follows:
\begin{Def}
{\sc Circuit maze} is a search problem on a $2^n\times 2^n$ grid --- the
maze is specified using a boolean circuit $C$ that takes as input a bit
string of length $O(n)$ that represents the location (coordinates) of a
possible wall or barrier between 2 adjacent grid points. $C$ has a
single output bit that indicates whether in fact a barrier exists at
that location.
\end{Def}
A naive search for a grid point that corresponds to a dead end or a
split of a path is no longer feasible, so the search problem becomes
nontrivial. Notice that not only must some solution exist, but in
addition it is easy to use $C$ to check the validity of a claimed
solution.

The problem of finding, or computing, a Nash equilibrium of a game
(defined in Section~\ref{sec:ne}) is analogous. Nash's
theorem~\cite{Nash} assures us in advance that every game has a Nash
equilibrium, and ultimately it works by applying a similar combinatorial
principle (explained below) to the one that is being used to assure
ourselves that maze
problems of the sort defined above must have a solution.

\begin{figure}
\begin{center}
\ifcolourfigs
\includegraphics[width=5.0in]{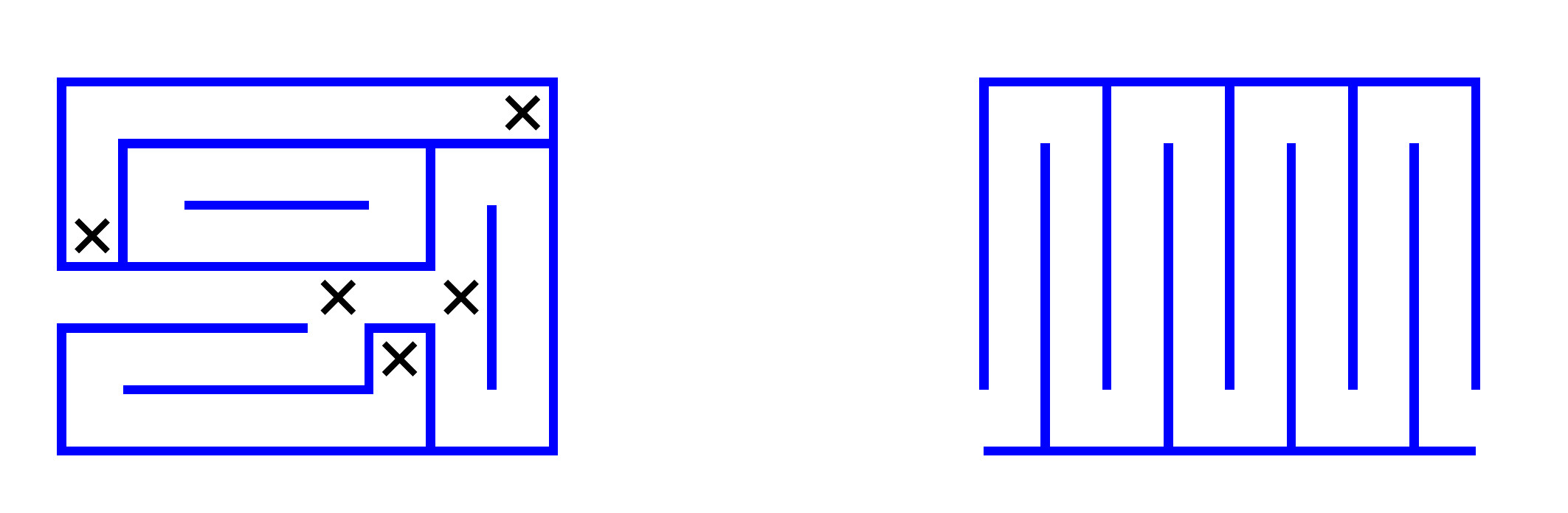}
\else
\includegraphics[width=5.0in]{mazes.pdf}
\fi
\end{center}
\caption{The left-hand maze (with a single entrance on the left) has 5
solutions marked with crosses. Note that while solutions need not be
accessible starting from the entrance, at least one will be accessible.
The right-hand maze illustrates how there need not be any solutions when
there are 2 entrances. In general, an odd number of entrances guarantees
the existence of a solution.
}\label{fig:mazes}
\end{figure}

\subsection{\NP\ Total Search Problems}

Why can we not apply a more standard notion of computational hardness,
such as \NP-hardness, to the problem of Nash equilibrium computation? It
turns out to be due precisely to its special status as a {\em total}\,
search problem, as informally defined above.

Standard \NP-hard problems do indeed have associated optimisation
problems that are total search problems, but they are not \NP\ total
search problems. Consider for example the (\NP-complete) travelling
salesman problem, commonly denoted {\sc TSP}. This has an associated
optimisation problem, in which we seek a tour of minimal length that
visits all the cities. By definition, such a minimal-length tour must
exist, so this total search problem is \NP-hard. But it is not, as far
as we know, a member of \NP\ --- given a solution there is no obvious
efficient way to check its optimality.

The complexity class \PPAD\ (along with related classes introduced by
Papadimitriou~\cite{Pap}) was intended to capture the computational
complexity of a relatively small number of problems that seem not to
have polynomial-time algorithms, but where there is a mathematical
guarantee that every instance has a solution, and furthermore, {\em
given a solution, the validity of that solution may be checked in
polynomial time}. Each class of problems (see
Section~\ref{sec:PPAD-related}) has an associated combinatorial
principle that guarantees that one has a total search problem.

\medskip
A simple result due to Megiddo~\cite{Meg} shows that if such a problem
is \NP-complete, then \NP\ would have to be equal to co-\NP --- it is
proved as follows.

Suppose we have a reduction from any \NP-hard problem (e.g. {\sc sat})
to any \NP\ total search problem (e.g. {\sc Nash}). Thus, from any {\sc
sat}-instance (a propositional formula) we can efficiently construct
a {\sc Nash}-instance (a game) so that
given any solution (Nash equilibrium) to that {\sc Nash}-instance we can (efficiently)
derive an answer to the {\sc sat}-instance. That reduction could then be used
to construct a nondeterministic algorithm for verifying that an
unsatisfiable instance of {\sc sat} indeed has {\em no} solution:  Just
guess a solution of the {\sc Nash}-instance, and check that it indeed
fails to identify a solution for the {\sc sat}-instance.

The existence of such a nondeterministic algorithm for {\sc sat} (one
that can verify that an unsatisfiable formula is indeed unsatisfiable,
hence implying that \NP=co-\NP) is an eventuality that is considered by
complexity theorists almost as unlikely as \Poly=\NP. We conclude that
{\sc Nash} is very unlikely to be \NP-complete.

\subsection{Reducibility among total search problems}\label{sec:reductions}

Suppose we have two total search problems $X$ and $Y$. We say that $X$
is reducible to $Y$ in polynomial time if the following holds. There
should be functions $f$ and $g$ both computable in polynomial time,
such that given an instance $I_X$ of $X$, $f(I_X)$ is an instance of
$Y$, and given any solution $S$ to $f(I_X)$, $g(S)$ is a solution to
$I_X$. Thus, if we had a polynomial-time algorithm that solves $Y$, the
reduction would construct a polynomial-time algorithm for $X$. So, such
a reduction shows that $Y$ is ``at least as hard'' as $X$.

While reductions in the literature are of the above form, one could use
a less restrictive definition, called a Turing reduction, in which
problem $X$ reduces to problem $Y$ provided that we we can write down an
algorithm that solves $X$ in polynomial time, provided that it has
access to an ``oracle'' for problem $Y$. As a consequence, if $Y$ does
in fact have a polynomial-time algorithm then so does $X$. However, the
reductions used in the literature to date about total search problems,
are of the more restricted type.

\subsection{\PPAD, and some related concepts}\label{sec:PPAD-related}

\PPAD, introduced in~\cite{Pap}, stands for ``polynomial parity argument
on a directed graph''. It is defined in terms of a rather
artificial-looking problem {\sc End of the line}, which is the
following:\footnote{This is called {\sc End-of-Line} in~\cite{CDT},
which is arguably a better name since the ``line'' is not unique, and
there is no requirement that we find the end of any specific line.}

\begin{Def} An instance of {\sc End of the line} consists of two boolean
circuits $S$ and $P$ each of which has $n$ inputs and $n$ outputs, such that
$P(0^n)=0^n\not=S(0^n)$. Find a bit vector $x$ such that $P(S(x))\not=x$
or $S(P(x))\not=x \not=0^n$. \footnote{Chen et al.~\cite{CDT} define
it in terms of a single circuit that essentially combines $S$ and $P$,
that takes an $n$-bit vector $v$ as input, and outputs $2n$ bits that
correspond to $S(v)$ and $P(v)$.}
\end{Def}

$S$ and $P$ (standing for {\em successor} and {\em predecessor})
implicitly define a digraph $G$ on $2^n$ vertices (bit strings of length
$n$) in which each vertex has indegree and outdegree at most 1. $(v,w)$
is an arc of $G$ (directed from $v$ to $w$) if and only if $S(v)=w$ and $P(w)=v$. By construction,
$0^n$ has indegree 0, and either has outdegree 1, or $P(S(0^n))\not=0^n$ (in
which case $0^n$ is a solution). Notice that $G$ permits efficient
local exploration (the neighbours of any vertex $v$ are easy to compute
from $v$) but non-local properties are opaque. We shall refer to a graph
$G$ that is represented in this way as an $(S,P)$-graph.

The ``parity argument on a directed graph'' refers to a more general observation:
define an ``odd'' vertex of a graph to be one where the total number of incident edges
is an odd number. Then notice that the number of odd vertices must be an even number.
Indeed, this observation applies generally to undirected graphs, but we apply it here
to $(S,P)$-graphs, in particular those $(S,P)$-graphs where vertex
$0^n$ actually has an outgoing arc (so has odd degree).
These $(S,P)$-graphs have an associated total search problem, of finding
an alternative odd-degree vertex. One could search for such a vertex by, for example
\begin{itemize}
\item checking each bit string in order, looking up its neighbours to see if it's an
odd vertex, or
\item following a directed path starting at some vertex; an endpoint
other than $0^n$ must be reached,
\end{itemize}
but since $G$ is exponentially large, these naive approaches will take
exponential time in the worst case.

\begin{Def} A computational problem $X$ belong to the complexity class \PPAD\
provided that $X$ reduces to {\sc End of the line} in polynomial time. Problem $X$
is \PPAD-complete provided that $X$ is in \PPAD, and in addition {\sc End of the line}
reduces to $X$ in polynomial time.
\end{Def}

Thus {\sc End of the line} stands in the same relationship to \PPAD,
that {\sc Circuit sat} does to \NP\ (although \NP\ is not actually defined in
terms of {\sc Circuit sat}, an equivalent definition would say that \NP-complete
problems are those that are polynomial-time equivalent to {\sc Circuit sat},
and a member of \NP\ is a problem that can be reduced to {\sc Circuit sat}).

The above definition of \PPAD\ is the one used in~\cite{DGP}; it is
noted there that there are many alternative equivalent
definitions\footnote{The original definition of~\cite{Pap} is in terms
of a Turing machine rather than circuits.}. Since {\sc End of the line}
is by construction a total search problem, it follows that members of
\PPAD\ are necessarily also total search problems.

\begin{paragraph}{How hard is ``\PPAD-complete''?}
\PPAD\ lies ``between \Poly\ and \NP'' in the sense that if \Poly\ were equal to \NP, then
all \PPAD\ problems would be polynomial-time solvable, while the assumption
that ``\PPAD-complete problems are hard'' implies \Poly\ not equal to \NP. It is
a fair criticism of these results that they do not carry as much weight as
do \NP-hardness results, partly for this reason, and partly because there
are only a handful of \PPAD-complete problems, while thousands of problems
have been shown to be \NP-complete.

Why, then, do we take \PPAD-completeness as evidence that a problem cannot be
solved in polynomial time? One argument is that {\sc End of the line} is
defined in terms of unrestricted circuits, and general boolean circuits seem
to be hard to analyse via polynomial-time algorithms. We should also note the oracle
separation results of~\cite{BCEIP} in this context.
\end{paragraph}

\begin{paragraph}{Related complexity classes}
Other combinatorial principles are considered in~\cite{Pap},
that guarantee totality of corresponding search problems.
For example, consider the pigeonhole principle, that given a function $f:X\longrightarrow Y$,
if $X$ and $Y$ are finite and $|X|>|Y|$, there must exist $x,x'\in X$ such that
$f(x)=f(x')$. Now define an associated computational problem:
\begin{Def}
The problem {\sc Pigeonhole circuit} has as instances, directed boolean circuits having
the same number of inputs and outputs. Let $f$ be the function computed by the circuit.
The problem is to identify {\em either} two distinct bit strings $x$, $x'$ with $f(x)=f(x')$,
{\em or} a bit string $x$ with $f(x)=0$ (where $0$ is the all-zeroes bit string).
\end{Def}
Note that by construction, this a total search problem, and it is an \NP\ total search problem
since it is computationally easy to check that a given solution is valid. At the same
time, it {\em seems} to be hard to find a solution, although it is
unlikely to be \NP-hard, due to Megiddo's result. The complexity class \PPP~\cite{Pap}
(for ``polynomial pigeonhole principle'') is defined as the set of all total search problems
that are reducible to {\sc Pigeonhole circuit}.

The pigeonhole principle is a generalisation of the parity argument on a directed graph.
To see this, notice that the function $f:X\longrightarrow Y$ can map a vertex $v$ of an
{\sc End of the line} graph to the adjacent vertex connected by an arc from $v$, or
to itself if $v$ has outdegree 0.
\PPAD\ is a subclass of \PPP --- {\sc End of the line} reduces to {\sc Pigeonhole circuit};
it would be nice to obtain a reduction the other way and show equivalence, but that
has not been achieved.

If we have an {\em undirected} graph of degree at most 2 with a known endpoint, then the search
for another endpoint is also a total search problem. The corresponding complexity class
defined in~\cite{Pap} is \PPA. The {\sc Circuit maze} problem that we considered informally
at the start, belongs to \PPA. As it happens, the problem is likely to be complete for \PPA.
\end{paragraph}

\section{Sperner's lemma, and an associated computational problem}

Sperner's lemma is the following combinatorial result, that can be used to prove Brouwer's
fixed point theorem.
\begin{theorem}\cite{Sperner}
Let $\{v_0,v_1,\ldots,v_d\}$ be the vertices of a $d$-simplex $S$, and suppose that
the interior of $S$ is decomposed into smaller simplices using additional vertices.
Assign each vertex a colour from $\{0,1,\ldots d\}$ such that $v_i$ gets colour $i$, and a vertex on
any face of $S$ must get one of the colours of the vertices of that face.
Interior vertices may be coloured arbitrarily.
Then, this simplicial decomposition must include a
{\em panchromatic simplex}, i.e. one whose vertices has all distinct colours.
\end{theorem}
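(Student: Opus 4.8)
The plan is to prove this by induction on the dimension $d$, via a parity (handshake-lemma) argument --- which, pleasingly, is essentially the same ``parity argument on a graph'' that underlies \PPAD. The base case $d=0$ is trivial: a $0$-simplex is a single vertex forced to have colour $0$, so it is (vacuously) panchromatic, and there is exactly one such simplex. In fact the inductive claim I would actually prove is slightly stronger than the statement: the number of panchromatic sub-simplices is odd (in particular nonzero), which makes the induction go through cleanly.

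For the inductive step, call a $(d-1)$-dimensional face of a sub-simplex a \emph{door} if its vertices receive exactly the colour set $\{0,1,\ldots,d-1\}$. The first key step is a purely local counting lemma: any sub-simplex $\sigma$ has exactly $0$, $1$, or $2$ doors, and it has exactly one door precisely when $\sigma$ is panchromatic. Indeed, if $\sigma$ is panchromatic the unique door is the face opposite the vertex coloured $d$; if $\sigma$ omits colour $d$ but realises all of $\{0,\ldots,d-1\}$ then exactly one colour is repeated and there are exactly two doors; in every other case some colour in $\{0,\ldots,d-1\}$ is missing and there are no doors at all.

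The second step assembles these local facts globally. Form a graph whose nodes are the sub-simplices together with one extra node representing the exterior of $S$, with an edge for each door: an internal door joins the two sub-simplices sharing it, while a door lying on $\partial S$ joins its sub-simplex to the exterior node. By the local lemma every sub-simplex node has degree $0$, $1$, or $2$, with odd degree if and only if it is panchromatic. Now one analyses the degree of the exterior node, which is where the boundary hypothesis of Sperner's lemma does its work: a door on $\partial S$ cannot contain a vertex of colour $d$, and by the face-colouring rule any such face must lie on the facet of $S$ spanned by $v_0,\ldots,v_{d-1}$; restricted to that facet the colouring satisfies the hypotheses of Sperner's lemma in dimension $d-1$, so by the induction hypothesis the number of doors on $\partial S$ is odd. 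Hence the exterior node has odd degree, and the handshake lemma (the number of odd-degree nodes of any finite graph is even) forces an odd number of the sub-simplex nodes to have odd degree, i.e. to be panchromatic, completing the induction.

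The step I expect to require the most care is the boundary analysis: verifying that a door on $\partial S$ must sit on the single facet opposite $v_d$, and that the colouring induced there genuinely meets the Sperner conditions one dimension down (both the ``$v_i$ gets colour $i$'' requirement and the lower-dimensional face-colouring requirement need to be checked against the original hypotheses). The local $0/1/2$ dichotomy is routine once stated correctly. I would also remark that, because every node of this graph has degree at most $2$, the graph is a disjoint union of paths and cycles --- so one may instead read the proof algorithmically, following a path from a boundary door until it terminates at a panchromatic simplex, which is exactly the {\sc End of the line} structure described earlier and is the reason this problem sits in \PPAD.
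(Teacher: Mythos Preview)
Your proof is correct and complete; it is the classical inductive parity proof of Sperner's lemma, and you are right that the graph you build has maximum degree~$2$ and hence exhibits the {\sc End of the line} structure.

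The paper's proof is in the same spirit but differs in two respects worth noting. First, the paper only sketches the $2$-dimensional case, whereas you carry out the full induction on $d$ and prove the stronger odd-count statement; your approach is strictly more general and self-contained. Second, and more substantively, the boundary is handled differently: rather than introducing an exterior node and invoking the inductive hypothesis to show that the number of boundary doors is odd, the paper \emph{augments} the triangulation by adjoining a fan of extra triangles along the $0$--$1$ edge of $S$ (all sharing the original vertex $v_1$), so that the enlarged triangulation has exactly one $0/1$-coloured edge on its exterior. This buys a direct, non-inductive reduction to {\sc End of the line}: the resulting \emph{directed} graph (edges oriented with colour $0$ on the left) has a unique source coming in from outside, and the degree-$1$ sinks/sources are exactly the trichromatic triangles. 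Your handshake argument is cleaner as a proof of the theorem and scales to arbitrary $d$ without modification; the paper's augmentation trick is tailored to emphasising the \PPAD\ membership, since it produces a single known starting vertex of indegree~$0$ rather than merely an odd number of boundary doors.
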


\begin{proof}
The proof can be found in many places, so here we just give a sketch for the 2-dimensional
case. Let us define the computational problem {\sc Sperner} (discussed in more detail below)
to be the problem of exhibiting a trichromatic triangle.
Essentially, the proof consists of a reduction from {\sc Sperner} to {\sc End of the line}!

Choose any two of the colours, say 0 and 1. We begin by adding some further triangles to
the triangulation as shown in Figure~\ref{fig:sperner} by the dotted lines: by adding
a sequence of triangles that have the original extremal vertex coloured 1, together
with two consecutive vertices on the 1-0 edge, we end up with a triangulation that
has only one 0/1-coloured edge on the exterior.

Construct a directed graph $G$ whose vertices are triangles of this extended
triangulation. Add a directed edge of $G$ between any two triangles that are adjacent
and separated by a 0/1-edge; the direction of the edge is so as to cross with 0 on its
left and 1 on its right. Consequently, there is a single edge coming into the triangulation
from the outside. It is simple to check that in $G$, trichromatic triangles correspond
exactly with degree-1 vertices, solutions to {\sc End of the line}. This completes
the reduction.
\end{proof}
\begin{figure}
\begin{center}
\ifcolourfigs
\includegraphics[width=4.0in]{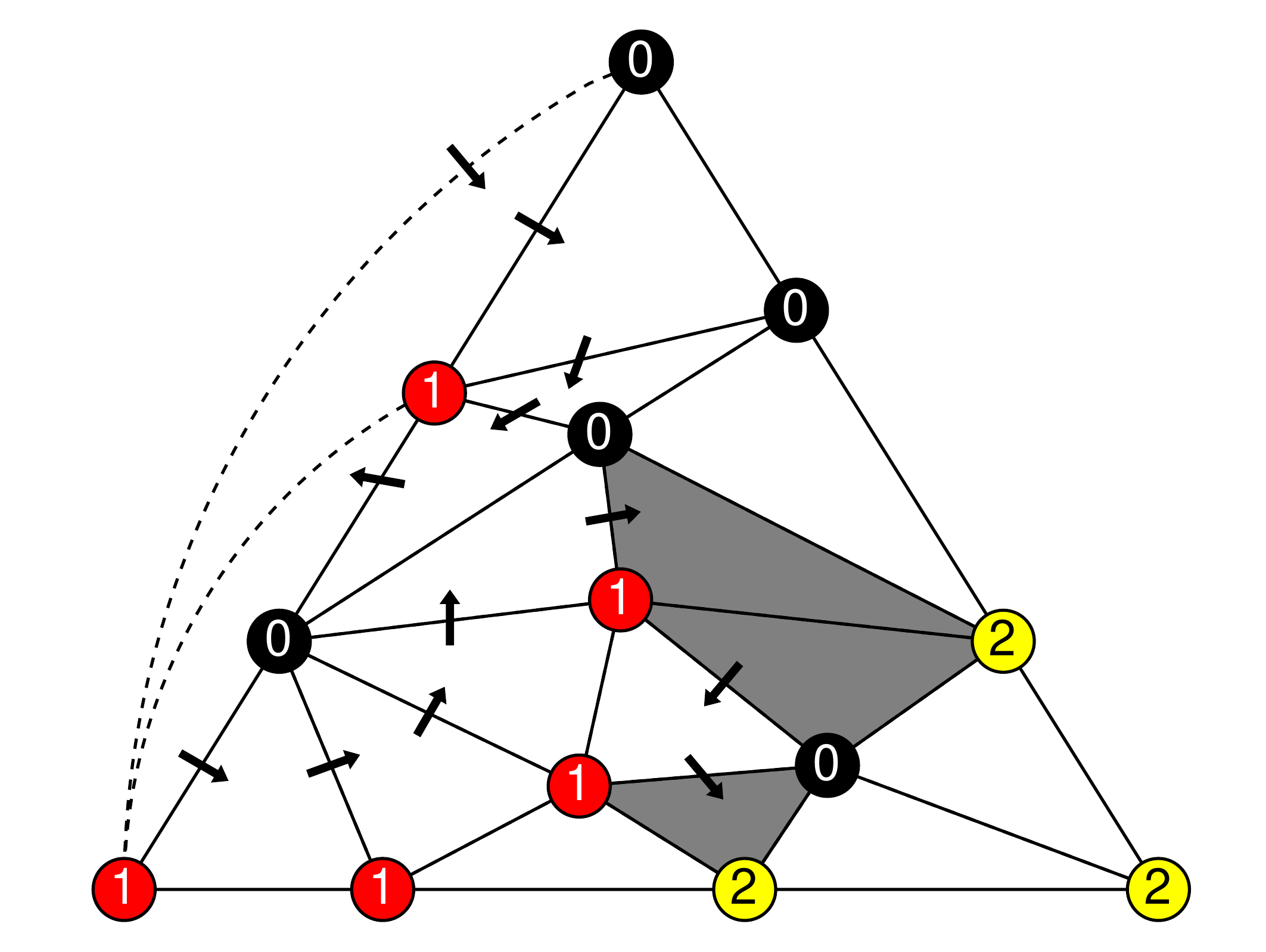}
\else
\includegraphics[width=4.0in]{2d-Sperner.pdf}
\fi
\end{center}
\caption{An example to illustrate Sperner's lemma.\newline
The solid lines show the triangle and the simplicisation (triangulation, since we
are in 2 dimensions.) The dashed lines add an additional sequence of topological
triangles along one side of the triangulation; as a result of those lines, there is
only one edge coloured 0-1 when viewed from the exterior.\newline
The arrows are the directed edges of the corresponding graph.\newline
The shaded triangles are trichromatic (End-of-line solutions in the corresponding graph).
}\label{fig:sperner}
\end{figure}
To define a challenging computational problem involved with searching for a panchromatic
Sperner simplex, we need to work with Sperner triangulations that are represented
so compactly that it becomes infeasible to just check every simplex. Hence, we
consider exponentially-large simplicial decompositions that satisfy the boundary
conditions required for Sperner's lemma.

Informally, the computational problem {\sc Sperner}~\cite{Pap} (with parameter $n$)
takes as input a Sperner triangulation that contains an number of vertices that is
exponential in $n$. The vertices ---and their colours--- cannot be explicitly listed,
since problem instances are supposed to be written down with a syntactic description
length that is polynomial in $n$. Instead, an instance is specified by a circuit $C$
that takes as input the coordinates of a vertex and outputs its colour. This means
that the vertices should lie on a regular grid, and $C$ takes as input a bit string that
represents the coordinates of a vertex.

Figure~\ref{fig:sperner-grid} shows one way to do this in 2 dimensions (assume that the
central grid has exponentially-many points). There is no restriction on how $C$
may colour the vertices labelled $*$; the other labels are fixed boundary
conditions to ensure that any trichromatic triangle lies within the grid. A 3-dimensional
version would just require some choice of simplicial decomposition of a cube, which
would then be applied to each small cube in the corresponding 3-d grid.

\begin{figure}
\begin{center}
\ifcolourfigs
\includegraphics[width=4.0in]{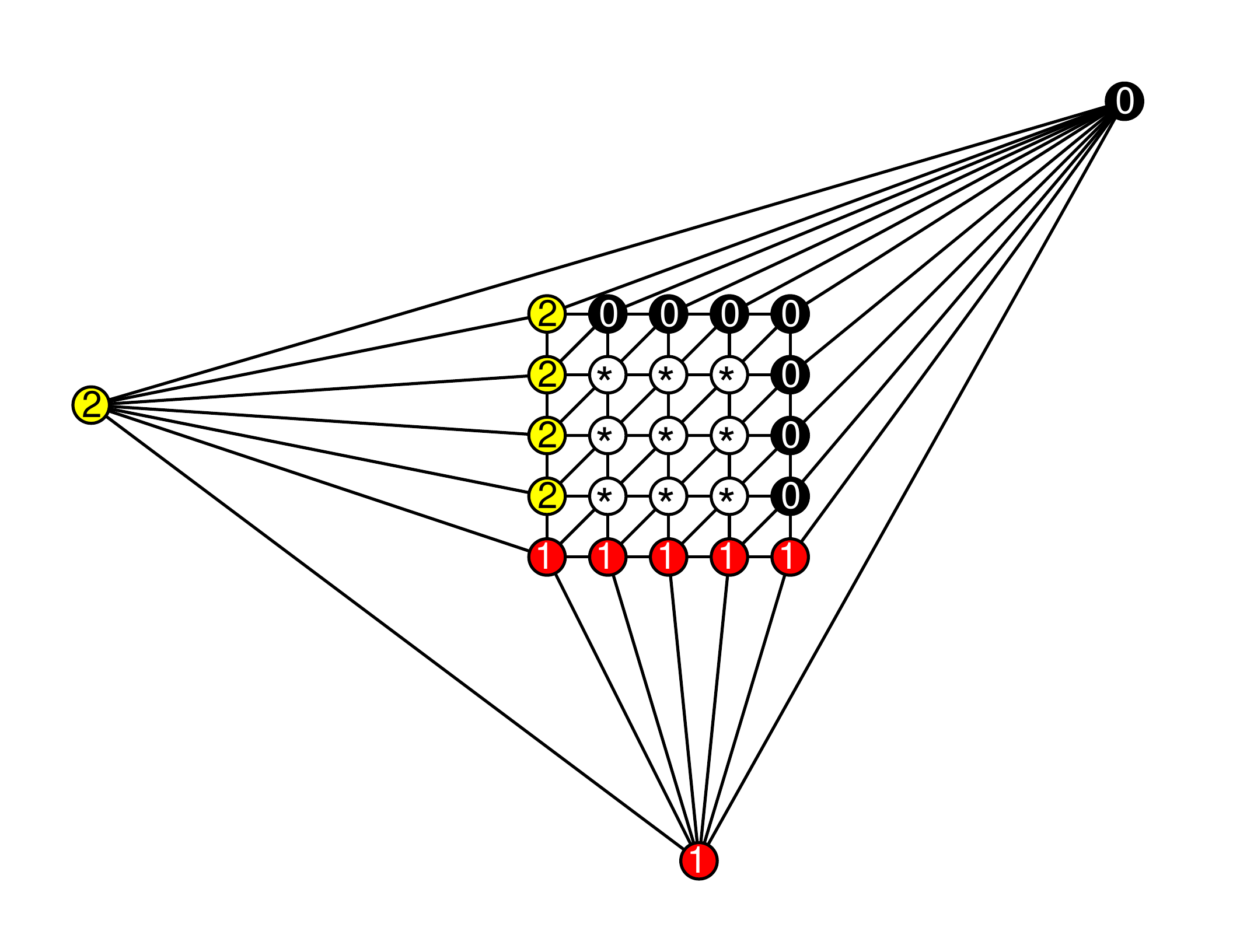}
\else
\includegraphics[width=4.0in]{Sperner-grid.pdf}
\fi
\end{center}
\caption{Embedding a triangulated grid inside an instance of Sperner's lemma
}\label{fig:sperner-grid}
\end{figure}

2-dimensional {\sc Sperner} is known to be \PPAD-complete~\cite{CD-icalp}; previously
it was known from~\cite{Pap} that 3-dimensional {\sc Sperner} is \PPAD-complete.
One interesting feature of the \PPAD-completeness results for Nash equilibrium computation,
is that normal-form games do not incorporate generic boolean circuits in an
obvious way. Contrast that with {\sc Sperner} as defined above, where
a problem instance incorporates a generic boolean circuit to determine the
colour of a vertex. We continue by explaining Nash equilibrium computation in more
detail, and how it is polynomial-time equivalent to {\sc Sperner}.

\section{Games and Nash Equilibria}\label{sec:ne}

A game $\G$ specifies a finite set of $\numplayers\geq 2$ players, where each player $\pindex$ has a
finite set $S_\pindex$ of (at least two) {\em actions} or {\em pure strategies}.
Let $S=S_1\times\ldots\times S_\numplayers$ be the set of {\em pure strategy profiles},
thus an element of $S$ represents a choice made by each player $\pindex$ of a single pure
strategy from his set $S_\pindex$. Finally, given any member of $S$,
$\G$ needs to specify real-valued {\em utilities} or {\em payoffs} to each
player that result from that pure strategy profile. Let $u^\pindex_s$ denote the payoff to
player $\pindex$ that results from $s\in S$.

Informally, a Nash equilibrium is a set of strategies ---one for each player--- where
no player has an ``incentive to deviate'' i.e. play some alternative strategy. After
giving some examples, we provide a precise definition, along with some notation. In
general, Nash equilibria do not exist for pure strategies; it is usually necessary to
allow the players to randomise over them, as shown in the following examples.
\begin{figure}[ht]
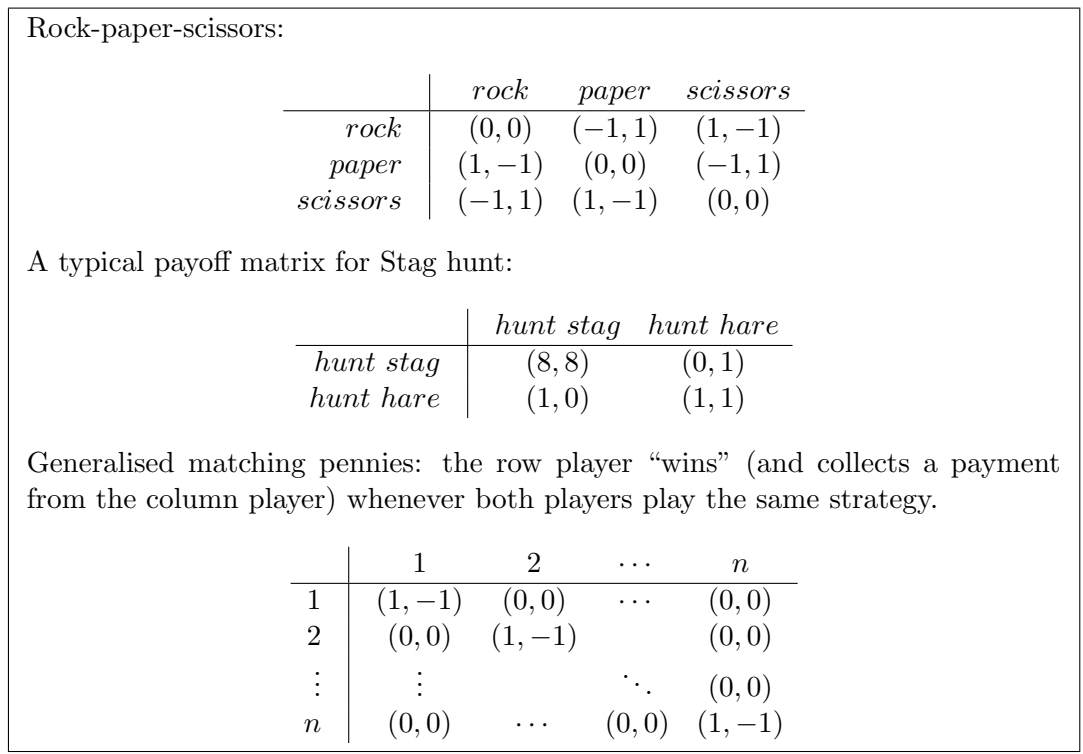

\begin{center}
\framebox{
\begin{minipage}{\boxwidth}
Rock-paper-scissors:
\[
\begin{array}{rcccc}
          & \vline & rock   & paper   & scissors \\ \hline
 rock     & \vline & (0,0)  & (-1,1)  & (1,-1)  \\
 paper    & \vline & (1,-1) & (0,0)   & (-1,1)  \\
 scissors & \vline & (-1,1) & (1,-1)  & (0,0)   \\
\end{array}
\]

A typical payoff matrix for Stag hunt:
\[
\begin{array}{rccc}
              & \vline & hunt~stag & hunt~hare     \\ \hline
    hunt~stag & \vline & (8,8)     & (0,1) \\
    hunt~hare & \vline & (1,0)     & (1,1) \\
\end{array}
\]

Generalised matching pennies: the row player ``wins'' (and collects a payment from the
column player) whenever both players play the same strategy.
\[
\begin{array}{rccccc}
            & \vline &   1    &   2    & \cdots & \numstrats \\ \hline
    1       & \vline & (1,-1) & (0,0)  & \cdots &   (0,0)    \\
    2       & \vline & (0,0)  & (1,-1) &        &   (0,0)    \\
 \vdots     & \vline & \vdots &        & \ddots &   (0,0)    \\
 \numstrats & \vline & (0,0)  & \cdots & (0,0)  &   (1,-1)   \\
\end{array}
\]
\end{minipage}
}\caption{Payoff matrices for example games. Each entry of a matrix contains two
numbers: the row player's payoff and the column player's payoff.} \label{fig:rps}
\end{center}
\end{figure}

\begin{example}
The traditional game of {\em rock-paper-scissors} fits in to the basic paradigm
considered here. The standard payoff matrix shown in Figure~\ref{fig:rps} awards one point for
winning and a penalty of one point for losing. The unique Nash equilibrium has both
players randomising uniformly over their strategies, for an expected payoff of zero.
\end{example}

\begin{example}
The {\em stag hunt} game~\cite{Rousseau} has two players, each of whom has two actions,
corresponding to hunting either a stag, or a hare. In order to catch a stag, both player must
cooperate (choose to hunt a stag), but either player can catch a hare on his own.
The benefit of hunting a stag is that the payoff (half share of a stag) is presumed to be
substantially larger than an entire hare.

A typical choice of payoffs (Figure~\ref{fig:rps}) to reflect this dilemma could award 8 to both
players when they both hunt a stag, 0 to a player who hunts a stag while the other player hunts
a hare, and 1 to a player who hunts a hare. Using these payoffs, there are 3 Nash equilibria:
one where both players hunt a stag, one where both players hunt a hare, and one where each player
hunts a stag with probability $\frac{1}{8}$ and a hare with probability $\frac{7}{8}$.
\end{example}

\begin{example}\label{ex:gmp}
{\em Generalised matching pennies} (Figure~\ref{fig:rps}) is a useful example to present here,
since it is used as an ingredient of the \PPAD-completeness proof for {\sc Nash}.

The game of {\em matching pennies} is a 2-player game, whose basic version has each player
having just two strategies, ``heads'' and ``tails''. The row player wins whenever both
players make the same choice, otherwise the column player wins. In the original version, a
win effectively means that the losing player pays one unit to the winning player; here we
modify payoffs so that a win for the column player just results in zero payoffs (he avoids
paying the row player). With that modification,  generalised matching pennies is the
extension to $\numstrats$ actions rather than just 2.

There is a unique Nash equilibrium in which both players use the uniform distribution over
their pure strategies, and it is easy to see that no other solution is possible.
\end{example}

\begin{paragraph}{Comments.}
The Stag Hunt example shows that there may be multiple equilibria, that some equilibria
have more social welfare than others, and that some are more ``plausible'' that others
(thus, it seems reasonable to expect one of the two pure-strategy equilibria to be played,
rather than the randomised one). The topic of {\em equilibrium selection} considers
formalisations of the notion of plausibility, and the problem of computing the relevant
equilibria. Of course, it is at least as hard to compute one of a subset of equilibria
as it is to compute an unrestricted one. For example, it is \NP-complete to
compute equilibria that guarantee one or both players a certain level of payoff, or that
satisfy various other properties that are efficiently checkable~\cite{CS, GZ}.

If we know the supports of a Nash equilibrium (the strategies played with non-zero
probability) then it would be straightforward to compute a Nash equilibrium, since it then
reduces to a linear programming problem. Thus (as noted by Papadimitriou in~\cite{NRTV})
the search for a Nash equilibrium is ---in the 2-player case--- essentially a combinatorial problem.
\end{paragraph}

\begin{paragraph}{Nash equilibrium and $\epsilon$-Nash equilibrium; definition and notation.}
A {\em mixed strategy} for player $\pindex$ is a distribution on $S_\pindex$, that is,
$|S_\pindex|$ nonnegative real numbers summing to 1. We will use $\msprob^\pindex_\aindex$ to
denote the probability that player $\pindex$ allocates to strategy $\aindex\in S_\pindex$.
If some or all players use
mixed strategies, this results in expected payoffs for the players. A {\em best response} by
a player is a strategy (possibly mixed) that maximises that player's expected payoff; the
assumption is that players do indeed play to maximise expected payoffs.

Call a set of $\numplayers$ mixed strategies $\msprob^\pindex_\aindex$ a {\em Nash equilibrium}
if, for each player $\pindex$, $\pindex$'s expected payoff
$\sum_{s\in S} u^\pindex_s \prod_{r=1}^\numplayers \msprob^r_{s_r}$ is maximized over all
mixed strategies of $\pindex$. That is, a Nash equilibrium is a set of
mixed strategies from which no player has an incentive to deviate.
Let $S_{-\pindex}=S_1\times\ldots\times S_{\pindex-1}\times S_{\pindex+1}\times\ldots\times S_\numplayers$,
the set of pure strategy profiles of players other than $i$.
For $s\in S_{-\pindex}$, let $\msprob_s = \prod_{r\not=i; r\in [k]} x^r_{s_r}$ and
$u^\pindex_{\aindex s}$ be the payoff to $\pindex$ when $\pindex$ plays $\aindex$ and
the other players play $s$.
It is well-known (see, e.g.,~\cite{OR}) that the following is an equivalent condition for a set
of mixed strategies to be a Nash equilibrium:
\begin{equation}
\forall\pindex,\aindex,\aindex' \sum_{s\in S_{-\pindex}} u^\pindex_{\aindex s}\msprob_s
> \sum_{s\in S_{-\pindex}} u^\pindex_{\aindex' s}\msprob_s
\Longrightarrow \msprob^\pindex_{\aindex'} = 0.
\end{equation}

Also, a set of mixed strategies is an $\epsilon$-Nash equilibrium for
some $\epsilon >0$ if the following holds:
\begin{equation}
\forall\pindex,\aindex,\aindex' \sum_{s\in S_{-\pindex}} u^\pindex_{\aindex s}\msprob_s
> \sum_{s\in S_{-\pindex}} u^\pindex_{\aindex' s}\msprob_s + \epsilon
\Longrightarrow \msprob^\pindex_{\aindex'} = 0.
\end{equation}
The celebrated theorem of Nash~\cite{Nash} states that every game has a Nash equilibrium.
\end{paragraph}

\begin{paragraph}{Comments.}
An $\epsilon$-Nash equilibrium is a weaker notion than a Nash equilibrium; a Nash equilibrium is an
$\epsilon$-Nash equilibrium for $\epsilon=0$, and generally an $\epsilon$-Nash equilibrium is
an $\epsilon'$-Nash equilibrium for any $\epsilon' >\epsilon$.
\end{paragraph}

\begin{paragraph}{Why focus on approximate equilibria?}
Approximate equilibria matter from the computational perspective, because for games of more than
2 players, a solution (Nash equilibrium) may be in irrational numbers, even when the utilities
$u^\pindex_s$ that specify a game, are themselves rational numbers~\cite{Nash}. The problem of
{\em computing} a Nash equilibrium ---as opposed to just knowing that one exists--- requires
us to specify a format or syntax in which to output the quantities that make up a solution
(i.e. the probabilities $\msprob^\pindex_s$).

In~\cite{cacmDGP} (an expository paper on the results of~\cite{DGP}) we noted an
analogy between equilibrium computation and the computation of a root of an odd-degree polynomial $f$
in a single variable. For both problems, there is a guarantee that a solution really exists, and
the guarantee is based on the nature of the problem rather than a promise that the given instance has
a solution. Furthermore, in each problem we have to deal with the issue of how to represent a
solution, since a solution need not necessarily be a rational number. And, in both cases a
natural approach is to switch to some notion of approximate solution: instead of searching for
$x$ with $f(x)=0$, search for $x$ with $|f(x)|<\epsilon$, which ensures that $x$ can be written
down in a standard syntax.
\end{paragraph}

\subsection{Some reductions among equilibrium problems}

By way of example, consider the well-known result (that predates the \PPAD-hardness
of {\sc Nash}; see~\cite{NRTV} Chapter 2 for further background) that symmetric 2-player games
are as hard to solve as general ones.

Given a $n\times n$ game $\G$, construct a symmetric $2n\times 2n$ game $\G'=f(\G)$, such
that given any Nash equilibrium of $\G'$ we can efficiently reconstruct a Nash
equilibrium of $\G$.
To begin, let us assume that all payoffs in $\G$ are positive --- the reason why this
assumption is fine, is that if there are any negative payoffs, then we can add a sufficiently
large constant to all payoffs and obtain a strategically equivalent version (i.e. one
that has the same Nash equilibria).

Now suppose we solve the $2n\times 2n$ game
$\G' = \left( \begin{array}{cc} 0 & \G \\  \G^T & 0 \end{array}\right)$, where we assume
that entries contain payoffs to both players, and the zeroes represent $n\times n$
matrices of payoffs of zero to both players.

Let $p$ and $q$ denote the probabilities that players 1 and 2 use their first $n$ actions, in
some given solution. If we label the rows and columns of the payoff matrix with the
probabilities assigned to them by the players, we have
\[
\begin{array}{cc}
 & {~~q ~~~ 1-q} \\
\begin{array}{r} {p} \\ {1-p} \end{array}
 &
\left( \begin{array}{cc} 0 & \G \\  \G^T & 0 \end{array} \right)
\end{array}
\]
If $p=q=1$, both players receive payoff 0, and both have an incentive to change their behaviour,
by the assumption that $\G$'s payoffs are all positive (and similarly if $p=q=0$).
So we have $p>0$ and $1-q>0$, or alternatively, $1-p>0$ and $q>0$.

Assume $p>0$ and $1-q>0$ (the analysis for the other case is similar).
Let $\{{p_1,...,p_\numstrats}\}$
be the probabilities used by player 1 for his first $n$ actions, $\{{q_1,\ldots q_n}\}$ the
probabilities for player 2's second $\numstrats$ actions.
\[
\begin{array}{cc}
 & {~~~q ~~~~ (q_1...q_n)} \\
\begin{array}{r} {(p_1,...p_n)} \\ {1-p} \end{array}
 &
\left( \begin{array}{cc} 0 & \G \\  \G^T & 0 \end{array} \right)
\end{array}
\]

Note that ${p_1+\ldots+p_n=p}$ and ${q_1+\ldots+q_n=1-q}$.
Then ${(p_1/p,\ldots,p_n/p)}$ and ${(q_1/(1-q),\ldots,q_n/(1-q))}$ are a Nash equilibrium
of $\G$. To see this, consider the
diagram; they should form a best response to each other for the top-right part.

This is an example of the kind of  reduction defined in Section~\ref{sec:reductions}.
In the context of games, this kind of reduction is called in~\cite{AKV} a {Nash homomorphism}.
Abbott et al.~\cite{AKV} reduce general 2-player games to win-lose 2-player games
(where payoffs are 0 or 1). Various other Nash homomorphisms have been
derived independently of the work relating {\sc Nash} to \PPAD. An
important one is Bubelis~\cite{Bubelis} that reduces (in a more
algebraic style) $k$-player games to 3-player games. The result also
highlights a key distinction between $k$-player games (for $k\geq 3$)
and 2-player games; in 2-player games the solutions are rational numbers
(provided that the payoff in the games are rational) while for 3 or more
players, the solutions may be irrational. The reduction
of~\cite{Bubelis} preserves key algebraic properties of the quantities
$\msprob^\pindex_\aindex$ in a solution, such as their degree (i.e the degree of the
lowest-degree polynomial with integer coefficients satisfied by
$\msprob^\pindex_\aindex$). Since we have noted that 2-player games have
solutions in rational numbers, this kind of reduction could not apply to
2-player games.

\subsection{The ``in \PPAD'' result}

A proof that  Nash equilibrium computation belongs to \PPAD\ necessarily incorporates a proof
of Nash's theorem itself (for approximate equilibrium), in that a reduction from
{\sc $\epsilon$-Nash} to {\sc End of the line} assures us that {\sc $\epsilon$-Nash}
is a total search problem, since it is clear that {\sc End of the line} is. To
derive Nash's theorem itself, that an exact equilibrium exists, the summary as in~\cite{Pap}
is as follows: Consider an infinite sequence of solutions to {\sc $\epsilon$-Nash} for
smaller and smaller $\epsilon$. Since the space of mixed-strategy profiles is compact,
these solutions have a limit point, which must be an exact solution.

\subsection{The Algebraic Properties of Nash Equilibria}

A reader who is interested on the combinatorial aspects of the topic can skip most of
this subsection; the main point to note is that since a Nash equilibrium is not necessarily
in rational numbers, this motivates the focus on approximate equilibria (i.e. $\epsilon$-Nash
equilibria), to ensure that there is a natural syntax in which to output a computed solution to a
game. Since any exact equilibrium is an approximate one, any hardness result for approximate
equilibria applies automatically to exact ones. The question of polynomial-time computation of
approximate equilibria was introduced in~\cite{GP} in order to finesse the irrationality
of exact solutions. However, algorithms for approximate equilibria go back earlier, notably
Scarf's algorithm~\cite{Scarf}.

For positive $\epsilon$, an $\epsilon$-Nash equilibrium need not be at all close to a true
Nash equilibrium. Etessami and Yannakakis~\cite{EY} show that even for exponentially small
$\epsilon$, an $\epsilon$-Nash equilibrium may be at variation distance 1 from any true
Nash equilibrium, for a 3-player game. Furthermore, they also show that computing an
approximate equilibrium that is within variation distance $\epsilon$ from a true one
(an {\em $\epsilon$-near} equilibrium) is square-root-sum hard. This refers to the following
well-known computational problem:
\begin{Def}
The {\em square root sum} problem takes as input two sets of positive integers, say
$\{x_1,\ldots,x_n\}$ and $\{y_1,\ldots,y_n\}$. The question is: is the sum of the square
roots of the first set, greater than the sum of square roots of the second?
\end{Def}
The problem is not known to belong to \NP, although neither is it known to be hard for any
well-known complexity class.
In terms of upper-bounding the complexity of this problem, and also for that matter,
computing an $\epsilon$-near equilibrium, we
may note that the criteria for a set of numbers $\msprob^\pindex_\aindex$ to be an
$\epsilon$-near equilibrium can be expressed in the existential theory of real arithmetic,
which places the problem in \PSPACE~\cite{Renegar}.

The fact that Nash equilibrium probabilities can be irrational numbers (while payoffs are
rational numbers) appears in an example in Nash's paper~\cite{Nash}. He describes a simple
poker-style game with 3 players and a unique (irrational) Nash equilibrium.
We noted earlier that any game with rational payoffs has Nash equilibria probabilities that
are algebraic numbers.
But even for a 3-player game, if $n$ is the number of actions, the solution may require algebraic
numbers of degree exponentially large in $n$; as noted in~\cite{GP}, the constructions introduced
there give us a flexible way to construct, from a given polynomial $p$, an arithmetic circuit
whose fixpoints are the roots of $p$, and~\cite{GP} shows how to construct a 4-player game
whose equilibria encode those fixpoints. The paper of Bubelis~\cite{Bubelis}
gave an ``algebraic'' reduction from any $\numplayers$-player game (for $\numplayers>3$) to
a corresponding 3-player game, in a way that preserves the algebraic properties of the solutions.
Consequently 3-player games have the same feature of 4-player games identified in~\cite{GP}.
The reduction of~\cite{Bubelis} can in addition be used to show that 3-player games are as
computationally hard to solve as $\numplayers$-players, but it does not highlight the
expressive power of games as arithmetic circuits. Recently, Feige and Talgam-Cohen~\cite{FT} give
a reduction from (approximate) $\numplayers$-player {\sc Nash} to 2-player {\sc Nash}, that does
not explicitly proceed via {\sc End of the line}.


\begin{paragraph}{Is Nash equilibrium computation ever harder than \PPAD?}

Daskalakis et al.~\cite{DFP} show that most standard classes of concisely-represented games
(such as graphical games and polymatrix games)
have equilibrium computation problems that reduce to {\sc 2-Nash} and so belong to \PPAD.
Schoenebeck and Vadhan~\cite{SV} study the question for ``circuit games'', a very general
concise representation of games where payoffs are given by a boolean circuit. In this context
they obtain hardness results, but this is not an \NP\ total search problem. The zero-sum version~\cite{FIKU}
is also hard.
\end{paragraph}

\section{Brouwer functions, and discrete Brouwer functions}\label{sec:brouwer}

Brouwer's fixpoint theorem states that every continuous function from a convex compact domain
to itself, must have a {\em fixpoint}, a point $x$ for which $f(x)=x$.

\begin{paragraph}{Proving Brouwer's fixpoint theorem using Sperner's lemma}
Suppose to begin with that the domain in question is the $d$-simplex $\Delta^d$. We
have continuous $f:\Delta^d\longrightarrow\Delta^d$ and we seek a point $x\in\Delta^d$
with $f(x)=x$. Suppose $f$ is evaluated on a set $S$ of ``sample points'' in $\Delta^d$,
and at each point $x\in S$, consider the value of $f(x)-x$. If the $d+1$ vertices of $\Delta^d$
are given distinct colours $\{0,1,\ldots,d\}$, we colour-code any $x\in S$ according
to the direction of $f(x)-x$: we give it the colour of a vertex which is at least as distant from
$f(x)$ as from $x$. Such a colouring respects the constraints of Sperner's lemma.
If $S$ is used as the vertices of
a simplicial decomposition, a panchromatic simplex becomes a plausible location for a fixpoint
($f$ is displacing co-located vertices in all different directions). One obtains a fixpoint from
the limit of increasingly fine simplicial decompositions.

The result can then be extended to other domains that are topologically equivalent to
simplices, such as cubes/cuboids, as used in the constructions we describe here.
\end{paragraph}

In defining an
associated computational total search problem ``find the point $x$'', we need to identify
a syntax or format in which to represent a class of continuous functions. The format
should ensure that $f$ can be computed in time polynomial in $f$'s description length
(so that solutions $x$ are easily checkable, and the search problem is in \NP). The
class of functions that we define are based on {\em discrete Brouwer functions},
defined in detail below in Section~\ref{sec:dbfs}.

Discrete Brouwer functions form the bridge between the discrete circuit problem {\sc End of
the line} and the continuous, numerical problem of computing a Nash equilibrium.

Consider the following obvious fact\footnote{pointed out in~\cite{CDT} (Section 5.1)} that if
we colour the integers $\{0,\ldots,n\}$ such that
\begin{itemize}
\item each integer is coloured either red or black, and
\item 0 is coloured red, and $n$ is coloured black
\end{itemize}
then there must be two consecutive numbers that have different colours.

Now suppose that $n$ is exponentially large, but for any number $x$ in the range $\{0,\ldots,n\}$
we have an efficient test to identify the colour of $x$. In that case, we still have an efficient
search for consecutive numbers that have opposite colours, namely we can use binary search. If
$\lfloor n/2\rfloor$ is coloured black, we would search for the pair of numbers in the
range $\{0,\ldots,\lfloor n/2\rfloor \}$,
otherwise we would search in the range $\{\lfloor n/2\rfloor,\ldots,n\}$, and so on.

Now consider a 2-dimensional version with pairs of numbers $(x,y)$,
$x,y\in\{0,\ldots,n\}$, where pairs of numbers are coloured as follows.
\begin{itemize}
\item if $x=0$, $(x,y)$ is coloured red,
\item if $y=0$ and $x>0$ then $(x,y)$ is coloured yellow,
\item if $x,y>0$ and either $x=n$ or $y=n$ (or both), then $(x,y)$ is coloured black.
\end{itemize}
Thus, we have fixed the colouring of the perimeter of the grid of points, but we impose no constraints
on how the interior should be coloured. We claim that there exists a square
of size 1 that has all 3 colours, but notice that binary search no longer finds
such a square efficiently\footnote{Hirsch et al.~\cite{HV} give lower bounds for algorithms
that search for fixed points based on function evaluation. In this setting, an algorithm that
searches for a fixed point of $f$ has ``black-box access'' to $f$; $f$ may be computed on
query points but the internal structure of $f$ is hidden. They contrast the one-dimension
case with the 2-dimensional case by showing that in 2 dimensions, the search for an approximate
fixed point of a Lipschitz continuous function in 2D, accurate to $p$ binary digits,
requires $\Omega(c^p)$ for some constant $c$, in contrast to 1D where bisection may be used.}.
The fact that such a square exists is due to Sperner's
lemma, whose proof envisages following a path that enters the $n\times n$ grid,
ending up at a trichromatic square; intuitively, this can be achieved by entering
at the point on the perimeter where red and yellow are adjacent, and walking around the
outside of the red region. Since there is only one exterior red/yellow interval,
we are guaranteed to reach a stage where the red region is no longer bounded by
yellow points, but by black points.

\begin{figure}
\begin{center}
\ifcolourfigs
\includegraphics[width=6.0in]{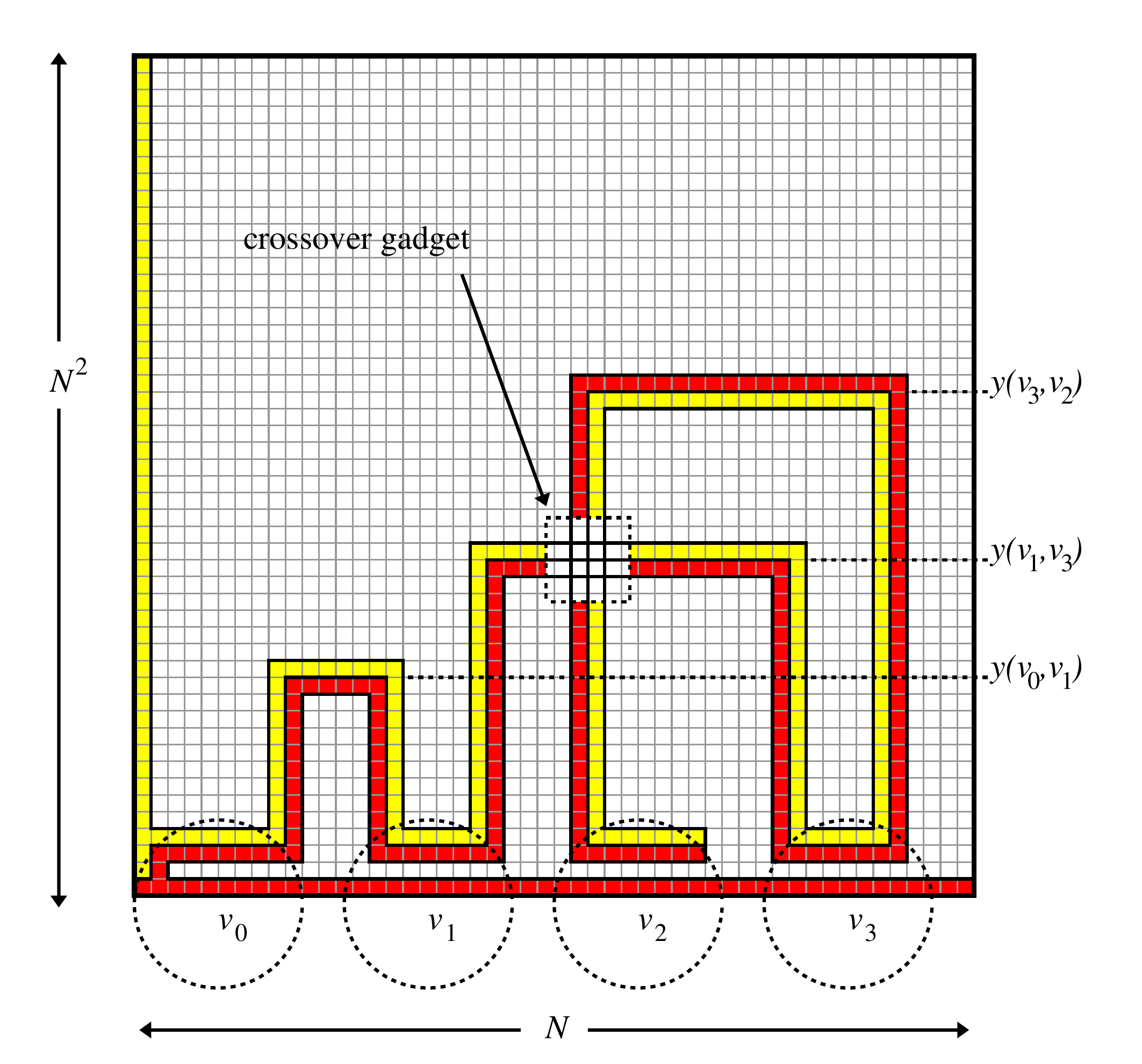}
\else
\includegraphics[width=6.0in]{eotl-dbf.pdf}
\fi
\end{center}
\caption{The figure shows how the line graph $(v_0,v_1), (v_1,v_3), (v_3,v_2)$ is encoded
as a discrete Brouwer function in two dimensions.\newline
The function colours the grid as a long thin red/yellow strip on a black background.
Each vertex $v_i$ is associated with a horizontal line segment close to the $x$-axis (circled
in the diagram). Each edge $(v_i,v_j)$ is associated with a ``bridge'' from the right-hand side
of $v_i$'s line segment to the left-hand side of $v_j$'s.\newline
Each bridge consists of three red/yellow line segments, going up, across and down. The
$y$-coordinate of the horizontal segment $y(i,j)$ should efficiently encode the values $i$ and $j$
(for example, $y(i,j)=i + N.j$ where $N$ is the number of $x$-values possible).\newline
Hence no two bridges can have overlapping horizontal sections. Unfortunately, a horizontal
section may need to cross a vertical section of a different bridge, as shown in the diagram.
Here a crossover gadget of~\cite{CD-icalp} (Figure~\ref{fig:dbf-crossover-gadget}) may be
used.
}\label{fig:eotl-dbf}
\end{figure}

\begin{figure}
\begin{center}
\ifcolourfigs
\includegraphics[width=3.0in]{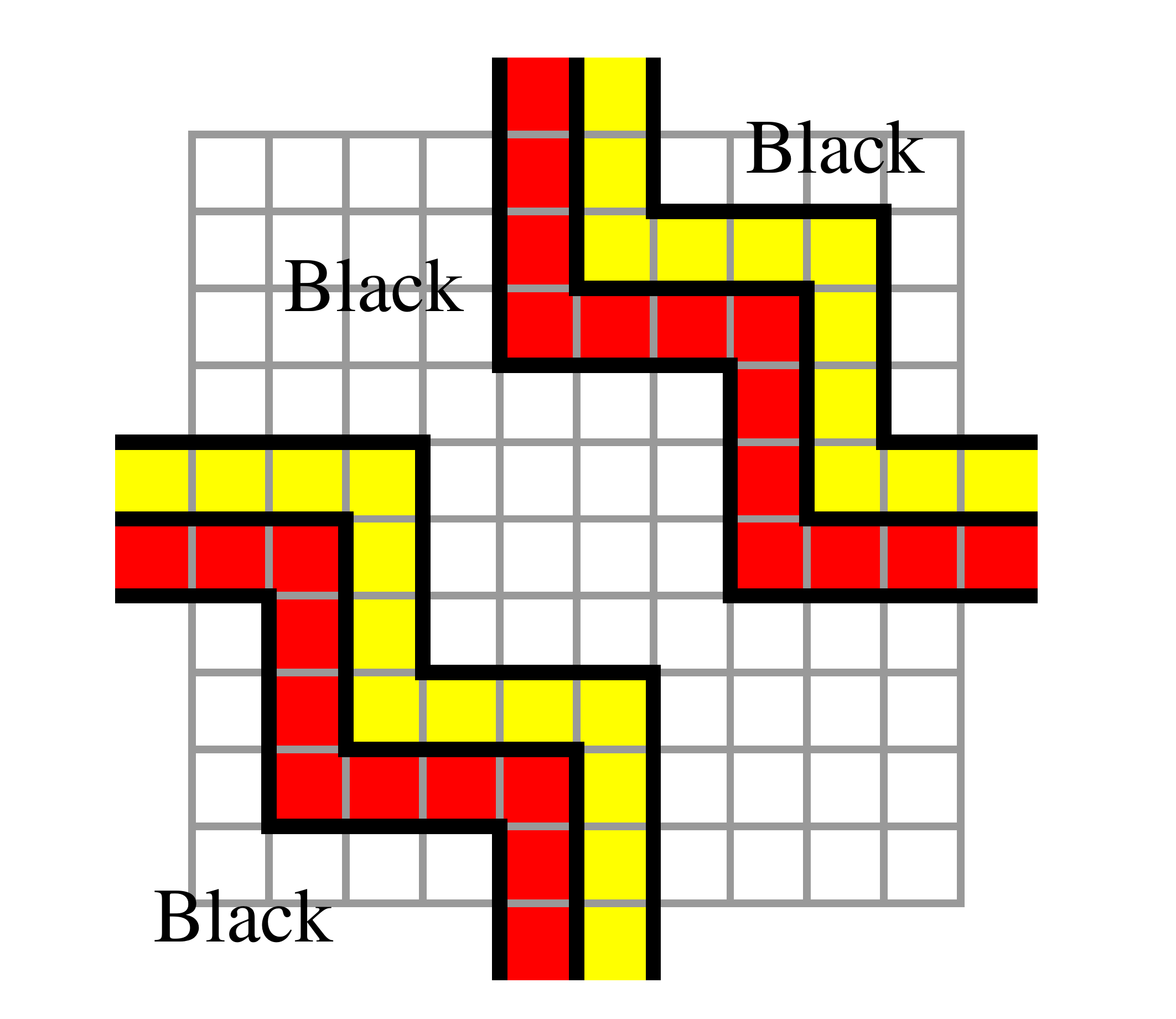}
\else
\includegraphics[width=3.0in]{dbf-crossover-gadget.pdf}
\fi
\end{center}
\caption{The crossover gadget referred to in Figure~\ref{fig:eotl-dbf}.\newline
To avoid all three colours meeting in the vicinity of the crossing of two
red/yellow path segments: the simple solution is to re-connect the paths in a way
that the topological structure of the red-yellow line no longer mimics the
structure of the {\sc End of the line} graph from which it was derived. The 3-dimensional
version, invented in~\cite{Pap} and refined in~\cite{DGP} does not require crossover gadgets.
}\label{fig:dbf-crossover-gadget}
\end{figure}

\subsection{Discrete Brouwer functions}\label{sec:dbfs}

We next give a detailed definition of discrete Brouwer functions (DBFs), together
with an associated total search problem (on an exponential-sized domain), that is
used in the reductions from {\sc End of the line} to the search for a
Nash equilibrium of a game. Figure~\ref{fig:eotl-dbf} shows how an {\sc End of the line}
graph is encoded by a DBF.

\begin{Def}
Partition the unit $d$-dimensional cube $\cube$ into ``cubelets''---
$2^{dn}$ axis aligned cubes of edge length $2^{-n}$.
A {\em Brouwer-mapping circuit} (in $d$ dimensions) takes as input a bit string of
length $dn$ that represents the coordinates of a cubelet, and outputs its colour, subject
to boundary constraints that generalise the 2-D case discussed above.
A {\em Discrete Brouwer function} (DBF) is the function computed by such a circuit,
and its syntactic complexity is the size of the circuit that represents it, which we restrict
to be polynomial in $n$.
\end{Def}

This gives rise to the total search problem of finding a vertex that belongs to cubelets of all the
different colours (a ``panchromatic cubelet'').
A superficial difference between the presentation of this concept in~\cite{DGP}
compared with~\cite{CDT} is that \cite{DGP} associates each cubelet with a colour
while \cite{CDT} associated each vertex with a colour, and the search is for a cubelet that
has vertices of all colours.

\section{From Discrete Brouwer functions to Games}

The reduction is broken down into two stages: first we take a circuit $C$ representing a DBF
and encode it as an arithmetic circuit $C'$ that computes a continuous function from
$\cube^d$ to $\cube^d$, where $\cube^d$ is the $d$-dimensional unit cube. Then we express
$C'$ as a game $\G$ in such a way that given any Nash equilibrium of $\G$, we can efficiently
extract the coordinates of a fixpoint of $C'$.

Figure~\ref{fig:dbf} gives the general idea of how to construct a continuous function $f$ from
a DBF, in such a way that fixpoints of the continuous function correspond to solutions of
the DBF. The idea is that each colour in the range of the DBF corresponds to a direction of
$f(x)-x$. For a colour that lies on the boundary of the discretised cube, we choose a direction
away from that boundary, so that $f$ avoids displacing points outside the cube. Indeed, this
is where we require the boundary constraints on the colouring of DBFs. Points at the centres
of the cubelets of the DBF will be displaced in the direction of that cubelet's colour, while
points on or near the boundary of two or more cubelets should get directions that interpolate
the individual colour-directions (which is necessary for continuity). A natural choice for
these colour-directions results in the following property: {\em For a proper subset of these
directions, their average cannot be zero. On the other hand, a weighted average of all the
directions can indeed be zero.}

\begin{figure}
\begin{center}
\ifcolourfigs
\includegraphics[width=5.0in]{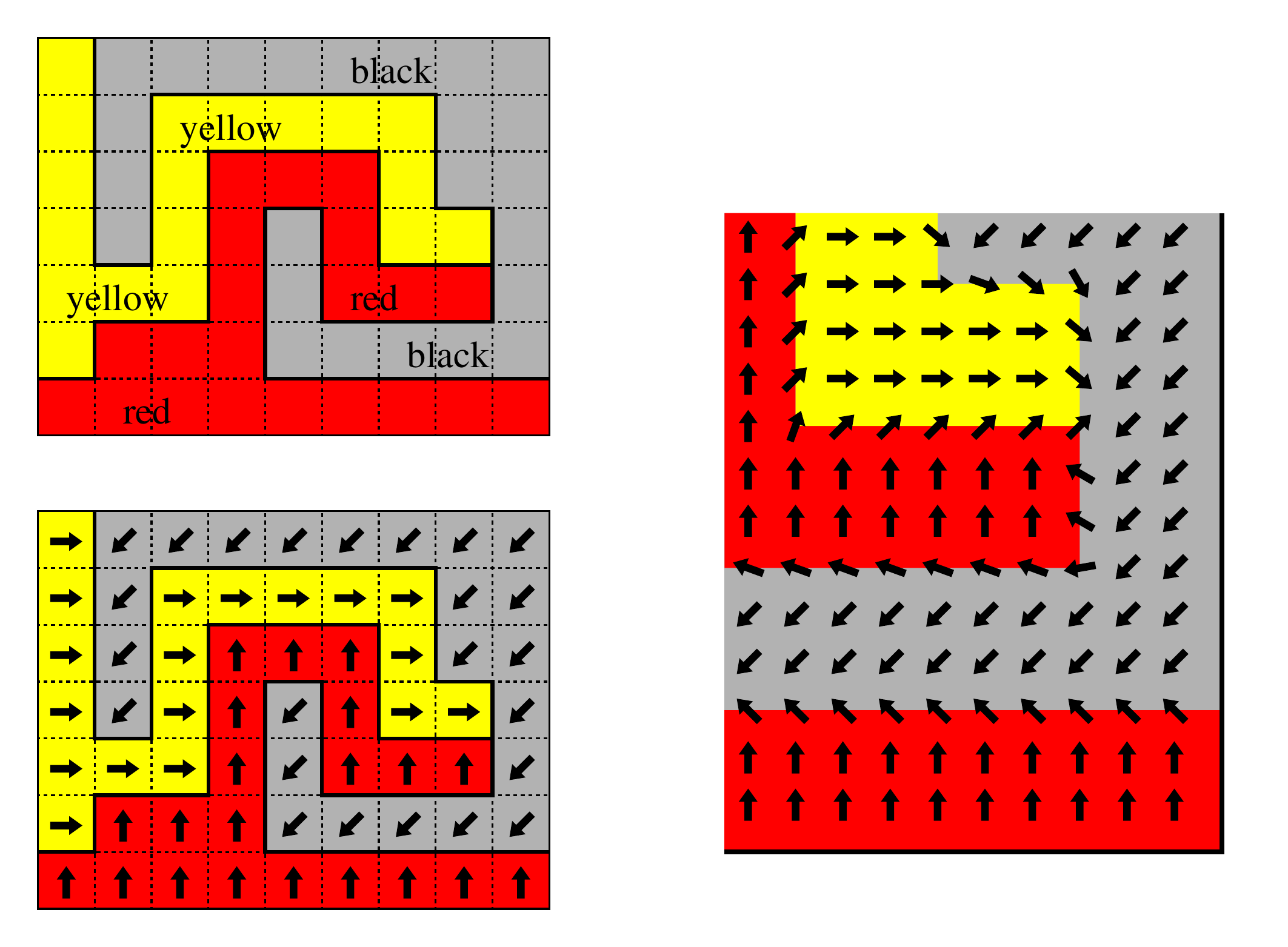}
\else
\includegraphics[width=5.0in]{2d-Brouwer.pdf}
\fi
\end{center}
\caption{The top left-hand diagram shows a simple discrete Brouwer function in
2 dimensions, and the arrows in the bottom left-hand diagram show the direction
of the corresponding continuous Brouwer function $f$ within each small square.\newline
The right-hand diagram is a magnified view of the bottom right-hand section of
the continuous Brouwer function, showing how the direction of the function
(i.e the direction of $f(x)-x$) can be interpolated on the boundaries of the
small squares.\newline
It is always possible to smoothly interpolate between 2 different directions without introducing
fixed points of $f$. But, in the vicinity of the point where all 3 colours meet, there will
necessarily be a point $x$ where $f(x)=x$.
}\label{fig:dbf}
\end{figure}

\begin{paragraph}{Arithmetic circuits}
The papers~\cite{CDT,DGP} show how to construct arithmetic circuits that compute these kinds of
continuous functions. An arithmetic circuit consists of a directed graph each of whose nodes
belongs to one of a number of distinct types, such as addition and multiplication. Each node
will compute a real number, and the type of a node dictates how that number is obtained. For
example, an addition node should have 2 incoming arcs, and it will compute the sum of the
values located at the adjacent nodes for those two arcs; it may also have a number of outgoing
arcs that allow its value to feed in to other nodes.

A significant difference between the circuits of~\cite{DGP} and those of~\cite{CDT} is that the
latter do not allow nodes that compute the {\em product of two computed quantities} (located
at other nodes with incoming edges). They do allow the multiplication of a computed
quantity by a constant. The result of this is
\begin{itemize}
\item The functions that can be computed are more constrained; however they are still
expressive enough to simulate the behaviour of DBF circuits,
\item two-player normal-form games can simulate the circuits of~\cite{CDT}, while three-player
games are required to simulate circuits that can multiply a pair of computed quantities.
\end{itemize}
The more general circuits introduced in~\cite{DGP} correspond to the algebraic complexity
class FIXP of Etessami and Yannakakis~\cite{EY}; the ones without these product nodes correspond
to \PPAD, the linear version of FIXP, called LINEAR-FIXP in~\cite{EY}.
\end{paragraph}

\begin{paragraph}{Interpolating the directions}
For a point $x$ that is {\em not} on a cubelet boundary, one can design a polynomial-sized
arithmetic circuit that compares the values of its coordinates against various numbers,
eventually determining which cubelet it belongs to. Its colour assigned by the original
DBF can be computed using a further component of the arithmetic circuit, that simulates the
DBF. A further component of the circuit can translate the colour to a direction-vector,
which gets added to $x$ to yield $f(x)$.

For $x$ on or close to a cubelet boundary, both papers~\cite{CDT,DGP} perform the interpolation
as follows. $f$ is computed not just at $x$, but also at a small cluster of
points in the vicinity of $x$. Then the average value of $f(x')$ is computed, for all points
$x'$ in this cluster. \cite{CDT} show how this can be done in non-constant dimension, which
allows a stronger hardness-of-approximation result to be obtained; see below. For constant
dimension this clustering trick can be avoided; \cite{GPS} shows the interpolation can
be done more directly, and without the multiplication nodes discussed above.
\end{paragraph}

\begin{paragraph}{Encoding approximate solutions, and snake embeddings}
{\em Snake embeddings}, invented by Chen et al.~\cite{CDT} allow hardness results for
$\epsilon$-approximate solutions where $\epsilon$ is inverse polynomial in $n$.
A snake embedding maps a discrete Brouwer function ${\cal F}$ to a lower-dimension
DBF ${\cal F}'$, in such a way that a solution to ${\cal F}'$ efficiently encodes a solution
to ${\cal F}$, and the number of cubelets along the edges decreases by a constant factor.
Applying these repeatedly, we obtain a DBF in $\Theta(n)$ dimensions where the cubelets
have $O(1)$ edge length. The coordinates of points in the cube can be perturbed by relatively large
amounts without leaving a cubelet.
\end{paragraph}

\subsection{Graphical Games}

Graphical games were introduced in~\cite{KLS,LKS} as a succinct means of representing certain
games having many players. In a graphical game, each player has an associated vertex of an
underlying graph $G$. The payoff to a player at vertex $v$ is assumed to depend only on
the behaviour of himself and his neighbours. Consequently, for a low-degree graph, the
payoffs can be described much more concisely than would be allowed by normal form.

Graphical games were used by Daskalakis et al.~\cite{DGP} as an intermediate stage between
discrete Brouwer functions, and normal-form games. That is, it was shown in~\cite{DGP}
how a discrete Brouwer function can be converted into a graphical game $\GG$ such that any
Nash equilibrium of $\GG$ encodes a solution to the discrete Brouwer function.
Essentially, the underlying graph of $\GG$ has the same structure as the arithmitic circuit
derived from the DBF, and the probabilities in the Nash equilibrium correspond to
the quantities computed in the circuit.
Chen et al~\cite{CDT} reduce more directly from discrete Brouwer functions to 2-player games.

\subsection{From graphical to normal-form games}

Recall the game of {\em generalised matching pennies} (GMP) from Example~\ref{ex:gmp}.
The construction used in~\cite{CDT,DGP} starts with a ``prototype'' zero-sum game consisting of
a version of GMP where all strategies have been duplicated, as shown in
the following payoff matrix $R$ for the row player ($M$ is some large positive quantity).
\[
R =
\left(
\begin{array}{ccccccc}
M & M & 0 & 0 & \cdots & 0 & 0 \\
M & M & 0 & 0 & \cdots & 0 & 0 \\
0 & 0 & M & M & \cdots & 0 & 0 \\
0 & 0 & M & M & \cdots & 0 & 0 \\
\vdots & \vdots & \vdots & \vdots & \ddots & \vdots & \vdots \\
0 & 0 & 0 & 0 & \cdots & M & M \\
0 & 0 & 0 & 0 & \cdots & M & M \\
\end{array}
\right)
\]
Having noted that GMP has a unique equilibrium in which both players randomise uniformly,
we next note that in the above game, each player will randomise uniformly over pairs of
duplicated strategies. So, each player allocates probability $\frac{1}{\numstrats}$ to
each such pair. Within each pair, that allocation of $\frac{1}{\numstrats}$ may be split
arbitrarily; $\G$ imposes no constraints on how to divide it.

The idea next, is to add certain quantities to the payoffs (that are relatively small in
comparison with $M$) so that a player's choice of how to split between a pair of ``duplicated''
strategies may affect the opponents' choice for other pairs of his strategies. It can be
shown that in Nash equilibria of the resulting game,
\begin{itemize}
\item the probability allocated to any strategy-pair is not exactly $\frac{1}{\numstrats}$ but
is fairly close, since the value of $M$ dominates the other payoffs that are introduced,
\item the probabilities $p$ and $p'$ allocated to the members of a strategy-pair are
used to represent the number $p/(p+p')\in [0,1]$,
\item the numbers thus represented can be made to affect each others' values in the same way
that players in a graphical game, having just 2 pure strategies $\{0,1\}$, affect each others'
probabilities of playing 1.
\end{itemize}

\section{Easy and hard classes of games}

The \PPAD-completeness results for normal-form games have led to a line of
research addressing the very natural question of what types of games admit
polynomial-time algorithms, and which ones are also \PPAD-hard. In particular, with regard
to \PPAD-hardness, the general aim is to obtain hardness results for games that are
more and more syntactically restricted; the first results, for 4 players~\cite{DGP-stoc},
then 3 players~\cite{CD-eccc,DP-eccc}, and then 2 players~\cite{CD-focs} can be seen
as the initial chapter in this narrative. In this section, our focus is on results
on the frontier of \PPAD-completeness and membership of \Poly. We do not consider
games where solutions can be shown to exist by means of a potential function argument
(known to be equivalent to {\em congestion games}~\cite{MS,Rosenthal}).

While normal-form games are the most natural ones to consider from a theoretical
perspective, there are many alternative ways to specify a game. It is, after all,
unnatural to write down a description of a game in normal form; only for very
small games is this feasible.

\subsection{Hard equilibrium computation problems}

So, one natural direction is to look for \PPAD-hardness results for more restricted
types of games than the ones currently known to be \PPAD-hard.

\begin{paragraph}{Restricted 2-player games} We currently
know that 2-player games are hard to solve even when they are sparse~\cite{CDT,CDT-wine};
the problem {\sc Sparse bimatrix} denotes the problem of computing a Nash equilibrium
for 2-player games having a constant bound on the number of non-zero entries in each
row and column. (For example, Theorem 10.1 of~\cite{CDT} show that 2-player games remain
\PPAD-hard when rows and columns have up to 10 non-zero entries.)
2-player games are also \PPAD-complete to solve
when they have 0/1 payoffs~\cite{AKV}\footnote{The paper of Abbott et al.~\cite{AKV}
appeared before the first \PPAD-hardness results for games, and reduced the search
for Nash equilibria of general 2-player games to the search for Nash equilibria
in $0/1$-payoff 2-player games (i.e. win-lose games).}.
\end{paragraph}

\begin{paragraph}{Restricted graphical games}
It is shown in~\cite{EGG} that degree-2 graphical games are solvable in polynomial
time, but \PPAD-complete for graphs with constant pathwidth (it is an open question
precisely what pathwidth is required to make the problem \PPAD-complete).
\end{paragraph}

\begin{paragraph}{Ranking games}
Brandt et al.~\cite{BFHS} study ranking games from a computational-complexity
viewpoint. Ranking games are proposed as model of various real-world competitions;
the outcome of the players' behaviour is a ranking of its
participants, thus is maps any pure-strategy profile to a ranking of the players.
Let $u^\pindex_r$ be the payoff to player $\pindex$ that
results from being ranked $r$-th, where we assume $u^\pindex_r\geq u^\pindex_{r+1}$
(players prefer to be ranked first to being ranked second, and so on.) 

In the 2-player case, any ranking game is strategically equivalent to a zero-sum game.
For more than 2 players, one can apply the observation of~\cite{vNM} that any $k$-player
game is essentially equivalent to a $k+1$-player zero-sum game, where the additional
player is given payoffs that set the total payoff to zero, but takes no part in the
game, in that his actions do not affect the other players' payoffs. Using this, it is
easy to reduce 2-player win-lose games to 3-player ranking games.
\end{paragraph}

\begin{paragraph}{Polymatrix games}
A polymatrix game is a multiplayer game in which any player's payoff is the sum of payoffs
he obtains from bimatrix games with the other players. A special case of interest is where
there is a limit on the number of pure strategies per player.
In general, these polymatrix games are \PPAD-complete~\cite{DFP}, which also
follows from~\cite{DGP} --- specifically
Section 6 of~\cite{DGP} presents a \PPAD-hardness proof for {\sc 2-Nash} by using graphical
game gadgets that have the feature that the payoff to a vertex can be expressed as the sum
of payoffs of 2-player games with his neighbours. (This extension to {\sc 2-Nash} applies the
observation that it is unnecessary to have a player who computes the product of his neighbours;
such a vertex has payoffs that cannot decompose as the sum of bimatrix games.)
\end{paragraph}

\subsection{Polynomial-time equilibrium computation problems}

In reviewing some of the computational positive results (types of games that have
polynomial-time algorithms) we focus on games where the search problem is guaranteed
to be total due to being in \PPAD, as opposed to {\em potential games} for example,
where equilibria can be shown to exist due to a potential function argument.

\begin{paragraph}{Restricted win-lose games}
The papers~\cite{ABOV,CLR} identify polynomial-time algorithms for subclasses of the win-lose games
shown to be hard by~\cite{AKV}. A win-lose game has an associated bipartite graph whose vertices
are the pure strategies of the two players. We add an edge from $s$ to $s'$
if when the players play $s$ and $s'$, the player of $s$ obtains a payoff of
1. Addario-Berry et al.~\cite{ABOV} study the special case when this graph is
planar, and obtain an algorithm based on a search in the graph
for certain combinatorial structures within the graph, that runs in polynomial
time for planar graphs. Codenotti et al.~\cite{CLR} study win-lose games in
which the number of winning entries in each row or column is at most 2.
Their approach also involves searching for certain structures within a
corresponding digraph.
\end{paragraph}

\begin{paragraph}{Approximate equilibria}
The main question is: for what values of $\epsilon>0$ can one compute $\epsilon$-Nash
equilibria in polynomial time? The question has mainly been considered in the 2-player case.
Note that~\cite{CDT} established that there is not a fully polynomial-time approximation
scheme for computing $\epsilon$-Nash equilibria. However, it is also known that for
any constant $\epsilon>0$, the problem is subexponential~\cite{LMM}.

There has not been much progress in the past couple of years on reducing the value of
$\epsilon$ that is obtainable; the papers cited in~\cite{DGP} represent the state of the art
in this respect. Perhaps the simplest non-trivial algorithm for computing approximate
Nash equilibria is the following, due to Daskalakis et al.~\cite{DMP}.
Figure~\ref{fig:approxalgo} shows a generalisation of the algorithm of~\cite{DMP}, to
$\numplayers$ players, obtained independently in~\cite{BGR, HRS}.
These papers also obtain the lower bounds for solutions having constant support. The
approximation guarantee is $1-\frac{1}{\numplayers}$; it is noted in~\cite{BGR} that one can
do slightly better by using a more sophisticated 2-player algorithm at the midpoint of the
procedure.
\end{paragraph}

\begin{figure}[ht]
\begin{center}
\framebox{
\begin{minipage}{\boxwidth}
\begin{enumerate}
\item For $\pindex = 1,2,\ldots,\numplayers-1$
\begin{enumerate}
\item Player $\pindex$ allocates probability $1-\frac{1}{k+1-i}$ to some arbitrary strategy
\end{enumerate}
\item For $\pindex = k,k-1,\ldots,1$
\begin{enumerate}
\item Player $\pindex$ allocates his remaining probability to a best response to the
strategy combination played so far.
\end{enumerate}
\end{enumerate}
\end{minipage}
}\caption{Approximation algorithm for $\numplayers$-player {\sc Nash}\label{fig:approxalgo}}
\end{center}
\end{figure}

\begin{paragraph}{Ranking games}
A restriction of the ranking games mentioned above, to those having
``competitiveness-based strategies'' were recently proposed in~\cite{GGKV} as a subclass of
ranking games that seems to have better computational properties, while still being
able to capture features of many real-world competitions for rank. In these games, a player's
strategies may be ordered in a sequence that reflects how ``competitive'' they are. If we let
$\{a_1,\ldots,a_n\}$ be the actions available to a player, then each $a_j$ has two associated
quantities, a cost $c_j$ and a return $r_j$, where return is a monotonically increasing function
of cost. Given any pure-strategy profile, players are ranked on the returns of their actions
and awarded prizes from that ranking. The payoff to a player is the value of the prize he wins,
minus the cost of his action. This results in a trade-off between saving on cost, versus spending
more with the aim of a larger prize. Notice that, in contrast to unrestricted ranking
games~\cite{BFHS}, this restriction allows games with many players to be written down concisely.
\end{paragraph}

\begin{paragraph}{Anonymous games}
Anonymous games represent a useful way to concisely describe certain games having a large
number of players --- in an anonymous game, each player has the same set of pure strategies
$\{a_1,\ldots,a_k\}$, and the payoff to player $i$ for playing $a_j$ depends only on $i$ and
the total number of players to play $a_j$ (but not the identities of those players; hence the
phrase ``anonymous game''). Daskalakis and Papadimitriou~\cite{DP08} give a polynomial-time approximation
scheme for these games, but note that it is an open problem whether an exact equilibrium may be
computed in polynomial time.
\end{paragraph}

\begin{paragraph}{Polymatrix games}
Daskalakis and Papadimitriou~\cite{DP09} show that polymatrix games may be solved exactly when the
bimatrix games between pairs of players are zero-sum. This result is applied in~\cite{GGKV} to
a subclass of the games studied there, specifically ranking games where
prize values are a {\em linearly} decreasing function of rank placement. 
\end{paragraph}

\section{The Complexity of Path-following Algorithms}

In this section we report on recent progress~\cite{GPS} that shows an even closer analogy
between equilibrium computation and the problem {\sc End of the line}. Namely, that the
solutions found by certain well-known ``path-following'' algorithms have the same
computational complexity as the search for the solution to {\sc End of the line} that is
obtained by following the line. Specifically, both are \PSPACE-complete.

Consider the algorithm for {\sc End of the line} that works by simply ``following the line''
from the given starting-point $0^n$ until an endpoint is reached.
Clearly this takes exponential time in the worst case,
but in fact we can say something stronger. Let {\sc Oeotl} denote the problem ``other end
of this line'', which requires as output, the endpoint reached by following directed edges
from the given starting-point. Papadimitriou observed in~\cite{Pap} that the following holds:
\begin{theorem}\cite{Pap}
{\sc Oeotl} is \PSPACE-complete.
\end{theorem}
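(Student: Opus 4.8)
The plan is to establish both membership in \PSPACE\ and \PSPACE-hardness. Membership is the easy direction: given an {\sc Oeotl} instance $(S,P)$ with $n$-bit vertices, one simply walks the line starting at $0^n$, repeatedly applying $S$ and checking that $P$ sends the result back. This may take up to $2^n$ steps, but we need only store the current vertex (an $n$-bit string) plus a counter bounded by $2^n$ (also $O(n)$ bits), so the whole procedure runs in polynomial space. When we reach a vertex $v$ with $S(v)=v$, or where $P(S(v))\neq v$, we output $v$; totality of {\sc End of the line} guarantees such a $v$ exists and is the unique endpoint of the line through $0^n$.

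For hardness, the approach is to reduce from a generic \PSPACE\ computation, i.e.\ from the problem of deciding whether a polynomial-space Turing machine $M$ accepts a given input $w$. The key observation is that the configuration graph of $M$ on $w$ is, after a standard preprocessing step, a collection of simple paths: configurations of a space-$s(|w|)$ machine are bit strings of length $O(s(|w|))$, and (making $M$ reversible, or equivalently making transitions uniquely determined both forwards and backwards by doubling up configurations à la Bennett) each configuration has in-degree and out-degree at most one. We take $0^n$ to encode the unique start configuration $c_{\mathrm{start}}(w)$, we let $S$ compute the next configuration and $P$ the previous one (both are local, hence poly-time computable boolean circuits of size polynomial in $s(|w|)$), and we arrange that the computation halts in a canonical ``accept'' configuration $c_{\mathrm{acc}}$ or a canonical ``reject'' configuration $c_{\mathrm{rej}}$, each a syntactically distinguished bit string. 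Then the endpoint of the line starting at $0^n$ is $c_{\mathrm{acc}}$ if $M$ accepts $w$ and $c_{\mathrm{rej}}$ otherwise, so reading off a single bit of the {\sc Oeotl} solution decides acceptance. This gives a polynomial-time reduction from an arbitrary \PSPACE\ language to {\sc Oeotl}.

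The main obstacle, and the only step that requires real care, is the reversibility transformation: a Turing machine's configuration graph need not have in-degree at most one (several configurations may lead to the same successor), so we cannot directly define the predecessor circuit $P$. The standard fix is to simulate $M$ by a reversible machine $M'$ using Bennett's construction, whose configurations record enough history (e.g.\ the sequence of transition choices, on a bounded-size tape since we only need to recover the single most recent step when $M$ is already deterministic) that both $S$ and $P$ become well-defined and locally computable; one must check that this only polynomially increases the space used and that the start/accept/reject configurations remain syntactically identifiable so that $g$ (the solution-translation map) is trivial. Everything else — verifying $P(0^n)=0^n\neq S(0^n)$, checking the circuits are polynomial size, confirming the reduction functions $f$ and $g$ run in polynomial time — is routine.
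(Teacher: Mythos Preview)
Your overall approach matches the paper's: reduce from the problem of computing the final configuration of a polynomial-space Turing machine, make the machine reversible so that successor and predecessor configurations are both locally computable, and let $S$ and $P$ be the circuits computing those. (You also supply the easy membership direction, which the paper's sketch omits.)

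The one place where your sketch goes astray is in justifying that the reversible simulation stays within polynomial space. Bennett's original construction records a history tape whose length is proportional to the running time, and a space-$s$ machine may run for $2^{\Theta(s)}$ steps, so the reversible configurations are not polynomially bounded. Your parenthetical fix --- that a bounded-size tape suffices ``since we only need to recover the single most recent step'' --- does not work: storing only the last transition lets $P$ undo one step, but after that you have lost the information needed to go back further, so the in-degree-$1$ property fails beyond depth one. The paper handles this by citing a different result, the space-bounded reversible simulation of Crescenzi and Papadimitriou~\cite{CP}, which shows that any polynomial-space machine can be simulated by a polynomial-space \emph{reversible} machine (by memorising a carefully chosen subset of past configurations rather than the full history). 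Once you replace the Bennett reference with that result, your argument goes through.
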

Notice that a solution to {\sc Oeotl} is apparently no longer in \NP, since while we can
check that it solves {\sc End of the line}, there is no obvious way to efficiently check
that it is the correct end-of-line, i.e. the one connected to the given starting-point.
\begin{proof}(sketch)
We reduce from the problem of computing the final configuration of a polynomial space bounded
Turing machine (TM).

The proof uses the fact that polynomial-space-bounded Turing machines can be simulated by
polynomial-space-bounded {\em reversible Turing machines}~\cite{CP}. By a {\em configuration}
of a TM computation we mean a complete description of an intermediate state of computation,
including the contents of the tape, and the state and location of the TM on the tape.
Reversible TMs have the property that given an intermediate configuration,
one can readily obtain not just the subsequent configuration, but also the previous
one; this is done by memorising a carefully-chosen subset of previous configurations that the
machine uses in the course of a computation~\cite{CP}.

From there, it is straightforward to simulate a reversible Turing machine using the two
circuits $S$ and $P$ of an instance of {\sc Oeotl}. Each vertex $v$ of an $(S,P)$-graph
encodes a configuration of a linear-space-bounded TM; $S$ computes the subsequent
configuration, and $P$ computes the previous configuration.
\end{proof}

In the context of searching for Nash equilibria, a number of algorithms have been proposed
that work by following paths is a graph $G$ associated with game $\G$, where $G$ is
derived from $\G$ via a reduction to {\sc End of the line}. Examples include
Scarf's algorithm~\cite{Scarf} (for approximate solutions of $\numplayers$-player games)
and the Lemke-Howson algorithm~\cite{LH} (an exact algorithm for 2-player games).
These algorithms were proposed as being in practice more computationally efficient
than a brute-force approach. Related algorithms include
the {\em linear tracing procedure} and homotopy
methods~\cite{Harsanyi,Herings,HvdE,HP,HP2} discussed below: in these papers the
motivation is {\em equilibrium selection} --- in cases where multiple equilibria exist, we
seek a criterion for identifying a ``plausible'' one.

\begin{paragraph}{Homotopy methods for Game Theory}
In topology, a homotopy refers to a continuous deformation from some geometrical object
to another one. Suppose we want to solve game $\G$. Let $\G_0$ denote another game
obtained by changing the numerical utilities of $\G$ in such a way that there is some
``obvious'' equilibrium; for example, we could change the payoffs so that each player obtains
1 for playing his first strategy, and 0 for any other strategy, regardless of the behaviour
of the other player(s). Then, we can continuously deform $\G_0$ to get back to $\G$.
For $t\in[0,1]$ let $\G_t$ be a game whose payoffs are weighted averages of those in $\G_0$
with those in $\G$: we write this as $\G_t = (1-t)\G_0 + t\G$. A homotopy method for solving
$\G$ involves keeping track of the equilibria of $\G_t$, which should move continuously as $t$
increases from $0$ to $1$. There should be a continuous path from the known equilibrium of
$\G_0$ to some equilibrium of $\G$ (an application of Browder's fixed point theorem~\cite{Browder}.)
The catch is, that in following this path it may be necessary for $t$ to go down as well as up.
\end{paragraph}

\begin{figure}[ht]
\begin{center}
\framebox{
\begin{minipage}{\boxwidth}
The Lemke-Howson algorithm is usually presented as a {\bf discrete path-following algorithm},
in which a 2-player game has an associated degree-2 graph $G$ as follows. Vertices of $G$ are mixed
strategy profiles that are characterised by the subset of the players' strategies that are
either played with probability zero, or are best responses to the other player's mixed strategy.
To be a vertex of $G$, a mixed-strategy profile should satisfy one of the following:
\begin{itemize}
\item All pure strategies are labelled, or
\item All but one pure strategies are labelled, and one pure strategy is labelled twice,
due to being both a best response, and for being played with probability zero.
\end{itemize}
In the first case, we either have a Nash equilibrium, or else we have assigned probability
zero to all strategies. In the second case, it can be shown that there are two ``pivot''
operations that we may perform, that remove one of the duplicate labels of the
doubly-labelled strategy, and add a label to some alternative strategy; see von Stengel~\cite{BvS}
for details. These operations correspond to following edges on a degree-2 graph.

\medskip
Viewed as a {\bf homotopy method} (see Herings and Peeters~\cite{HP2} Section 4.1) the algorithm
works as follows. For a $\numstrats\times\numstrats$ game $\G$ with players 1 and 2, choose
any $\pindex\in\{1,2\}$ and $\aindex\in[\numstrats]$ and give player $\pindex$ a large bonus
payoff for using strategy $a^\pindex_\aindex$; the bonus should be large enough to make
$a^\pindex_\aindex$ a dominating strategy for player $\pindex$. Hence, there is a unique
Nash equilibrium consisting of pure-strategy $a^\pindex_\aindex$ together with the pure best
response to $a^\pindex_\aindex$. Now, we reduce that bonus to zero, and the equilibrium
should change continuously; however, the requirement that we keep track of a continuously
changing equilibrium means that in general, the bonus cannot reduce monotonically to zero;
it may have to go up as well as down.

\medskip
The choice of $a^\pindex_\aindex$ at the start of the homotopy corresponds to the initial
``dropped label'' in the discrete path-following description.

\end{minipage}
}\caption{Two views of the Lemke-Howson algorithm\label{fig:lh}}
\end{center}
\end{figure}

The main theorem of~\cite{GPS} is that
\begin{theorem}\cite{GPS}
It is \PSPACE-complete to compute any of the equilibria that are obtained by the
Lemke-Howson algorithm.
\end{theorem}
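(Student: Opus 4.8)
The plan is to show \PSPACE-hardness by reducing from \textsc{Oeotl} (which is \PSPACE-complete by the earlier theorem), piggybacking on the chain of reductions already developed in the survey from \textsc{End of the line} to \textsc{Nash} via discrete Brouwer functions. The key observation is that every stage of that chain is not merely a many-one reduction of the associated total search problems, but in fact a \emph{homotopy-preserving} reduction: the path traced by the Lemke--Howson algorithm in the final 2-player game should correspond, step by step, to the directed path followed in the $(S,P)$-graph of the \textsc{Oeotl} instance. So the first thing I would do is fix a ``canonical'' version of each reduction (\textsc{End of the line} $\to$ DBF, DBF $\to$ arithmetic circuit, arithmetic circuit $\to$ graphical game, graphical game $\to$ normal-form 2-player game) and verify that each one maps the start of the line to the start of the line and maps following-an-edge to following-an-edge-or-a-bounded-segment-thereof. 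Concretely, the red/yellow strip of Figure~\ref{fig:eotl-dbf} is literally a thickened copy of the \textsc{End of the line} path, so tracing the Brouwer path along the strip mirrors tracing the graph; the arithmetic-circuit and game encodings then carry this homotopy along with bounded local distortion.

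\textbf{Matching the Lemke--Howson path to the homotopy.} The central technical step is to pin down which particular equilibrium-selection behaviour the Lemke--Howson algorithm exhibits on the games produced by the reduction, and to show it is the ``follow the line from $0^n$'' behaviour rather than some shortcut. Here I would use the homotopy view of Lemke--Howson from Figure~\ref{fig:lh}: choosing the dropped label $a^\pindex_\aindex$ corresponds to starting the homotopy at the artificial equilibrium in which one player plays a bonus-dominated pure strategy. In the game built from the DBF, I would arrange (by an appropriate choice of the ``prototype'' GMP game and of which strategy receives the bonus) that this artificial equilibrium encodes exactly the trivial end $0^n$ of the \textsc{End of the line} graph, and that as the bonus is withdrawn the tracked equilibrium moves through configurations encoding successive vertices of the directed path. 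Since the Lemke--Howson path in a nondegenerate game is \emph{unique} once the dropped label is fixed, there is no freedom: the equilibrium it outputs is forced to encode the \emph{other} endpoint of the line, i.e. the \textsc{Oeotl} answer. Extracting that answer from the output equilibrium is then just the composition of the (polynomial-time) decoding maps $g$ from each stage of the reduction.

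\textbf{Membership in \PSPACE\ and nondegeneracy.} For the upper bound, I would note that the Lemke--Howson path has length at most exponential and each pivot is polynomial-time computable, so the path can be simulated in polynomial space; hence computing its endpoint is in \PSPACE, and together with the hardness this gives \PSPACE-completeness. Two bookkeeping issues need care. First, Lemke--Howson is classically stated for nondegenerate games, so I would either perturb the constructed game lexicographically (a standard device) or build nondegeneracy into the reduction from the outset, checking that the correspondence with the line survives the perturbation. Second, the theorem claims hardness for \emph{any} of the equilibria reachable by Lemke--Howson, so I should observe that over all choices of dropped label $a^\pindex_\aindex$ the algorithm still cannot produce the answer without, in effect, solving \textsc{Oeotl}: for the specific start label used in the reduction it is forced to the line's far endpoint, and this already suffices to make ``output some Lemke--Howson equilibrium'' \PSPACE-hard.

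\textbf{Main obstacle.} The hard part will be the second paragraph --- showing that the \emph{discrete} Lemke--Howson pivoting faithfully tracks the \emph{continuous} homotopy all the way along the line, without the path ever ``jumping'' across the red/yellow strip via a crossover gadget (Figure~\ref{fig:dbf-crossover-gadget}) or short-circuiting through an unintended region of the game's best-response polytopes. This requires a quantitative argument that the bonus payoff $M$ dominates all the introduced perturbations uniformly along the path (so the combinatorial structure of best responses is exactly the one that mimics the \textsc{End of the line} graph), and a careful treatment of the crossover gadgets so that they genuinely decouple crossing strands rather than merging them; the 3-dimensional DBF construction of~\cite{Pap,DGP}, which avoids crossovers entirely, may be the cleaner route even though it ultimately costs a few extra players before reducing back down to 2.
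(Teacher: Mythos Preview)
Your overall architecture matches the paper's: reduce from \textsc{Oeotl}, pass through discrete Brouwer functions, use the homotopy description of Lemke--Howson, and move to three dimensions to eliminate crossover gadgets (the paper makes exactly this last point). Where you diverge is in the mechanism that confines the homotopy path to the line. You propose to verify that each stage of the reduction chain is ``homotopy-preserving'' and then argue, via domination by the large constant $M$, that the Lemke--Howson pivots never leave the encoded strip; you correctly flag this as the hard part. The paper sidesteps this entirely by choosing the homotopy differently. Rather than taking $\G_0$ to be the Lemke--Howson bonus game and tracking pivots, the paper lets $\G_1$ encode the full DBF and lets $\G_0$ encode the \emph{same} DBF with the line structure erased (just the background colouring). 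Then for the linear homotopy $\F_t=(1-t)\F_0+t\F_1$ between the associated Brouwer functions, one has $\F_0=\F_1$ throughout the black region, so $\F_t$ agrees with both there for every $t$ and cannot have a fixpoint there; the path of fixpoints is therefore trapped in the non-black line region and must terminate at the far endpoint. This single observation replaces the stage-by-stage verification and the $M$-domination analysis you anticipated.

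So your plan is not wrong, but it is working harder than necessary: the paper's trick is to engineer $\G_0$ so that it \emph{coincides} with $\G_1$ off the line, which makes confinement automatic rather than something to be checked through a tower of reductions. Your fourth paragraph correctly identifies the obstacle but not this resolution. One further point: connecting the $\G_0$-to-$\G_1$ homotopy back to the specific Lemke--Howson dropped-label homotopy still requires an argument (the paper's sketch elides this), so your instinct that there is real work in aligning the two pictures is sound; it is just that the heavy lifting happens at the Brouwer-function level, not at the level of individual pivots.
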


Savani and von Stengel~\cite{SvS} established earlier that the Lemke-Howson does indeed
take exponentially many steps in the worst case; this new result, then, says that moreover
there are no ``short cuts'' to the solution obtained by Lemke-Howson, subject only to the
very weak assumption that \PSPACE-hard problems require exponential time in the worst case.
(The \PPAD-hardness of {\sc Nash} already established that these solutions are hard to find
subject to the hardness of \PPAD, but the hardness of \PPAD\ is a stronger assumption.)

\begin{figure}
\begin{center}
\ifcolourfigs
\includegraphics[width=5.0in]{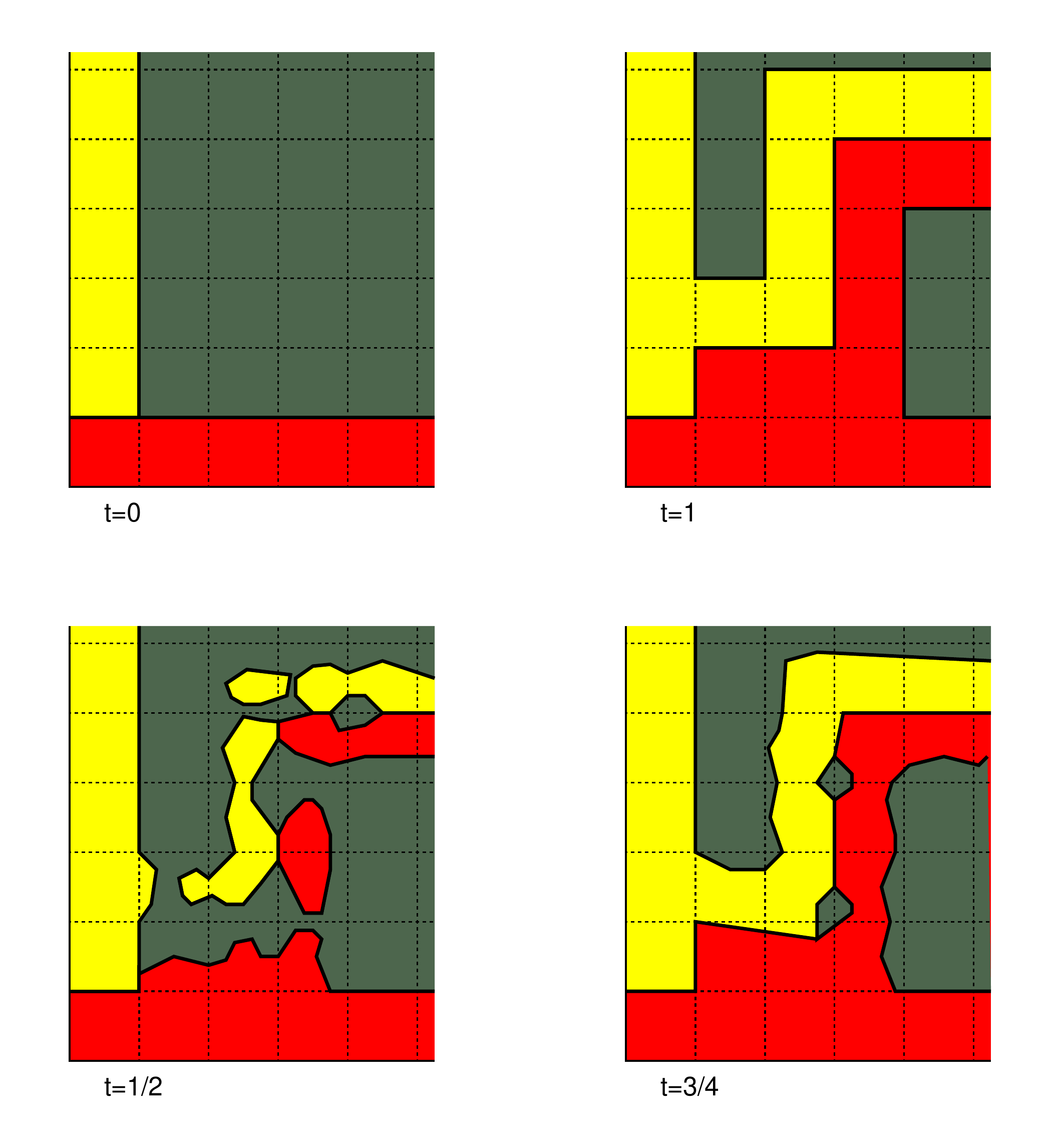}
\else
\includegraphics[width=5.0in]{homotopy-figure.pdf}
\fi
\end{center}
\caption{Homotopy for Brouwer function in the plane
}\label{fig:homotopy}
\end{figure}

\begin{proof}(the main ideas)
Recall the role of discrete Brouwer functions in the reductions from {\sc End of the line}
to {\sc Nash}, as discussed in Section~\ref{sec:brouwer}. Suppose $\G=\G_1$ is associated
with a discrete Brouwer function having the kind of structure indicated in
Figure~\ref{fig:eotl-dbf}; the bottom left-hand corner looks roughly like the top
right-hand ``$t=1$'' diagram in Figure~\ref{fig:homotopy}. Suppose that $\G_0$ is
associated with a DBF in which the line-encoding structure has been stripped out:
the top left ``$t=0$'' diagram in Figure~\ref{fig:homotopy}.

Consider the linear homotopy between the corresponding continuous Brouwer functions
$\F_0$ and $\F_1$; $\F_t(x)=(1-t)\F_0(x)+\F_1(x)$. $\F_t$ colours the square according
to the direction of $\F_t(x)-x$. Figure~\ref{fig:homotopy} (the bottom half) shows
how these colourings may evolve for intermediate values of $t$. For intermediate values
of $t$, in regions where $\F_0=\F_1$ we have $\F_t(x)=\F_0(x)=\F_1(x)$ for all $t$.
This means that fixpoints of $\F_t$ cannot be located in the region where
$\F_1$ gets the colour 0 (or black). So, a continuous path of fixpoints necessarily
follows the line of non-black regions of $\F_1$, and necessarily ends up at the
end of that particular line. The crossover gadget (Figure~\ref{fig:dbf-crossover-gadget})
would break this correspondence with {\sc Oeotl}; we fix that by moving to 3-dimensions,
where the gadget is not needed.
\end{proof}

\section{From Games to Markets}

The \PPAD-completeness results for variants of {\sc Nash} has led, more recently, to
\PPAD-completeness results for the problems of computing certain types
of market equilibria. We continue by giving the general idea and
intuition, then we proceed to the formal definitions. For more details on the
background to this topic, see Chapters 5,6 of~\cite{NRTV}; here we aim to focus on giving a
brief overview of more recent \PPAD-completeness results.

Suppose we have a set $G$ of {\em goods} and a set $T$ of {\em traders}. Each
trader $i$ in $T$ has a utility function $f^i$ that maps bundles of goods to non-negative
real values. Assume the goods are divisible, so $f^i$ is a mapping from vectors of non-negative
numbers (the quantity of each good in a bundle) to non-negative real numbers.
A standard assumption is that $f^i$ should be continuous, and non-decreasing when restricted to an
individual good (it doesn't hurt to receive a bigger quantity), and concave (there is non-increasing
marginal utility). 
Now, suppose that each good $j$ gets a (non-negative real) unit price $p_j$. A trader with
some given budget can then identify (from $f^i$) the optimal bundle of goods that his budget
will purchase (where optimal bundles need not be unique). In general, the aim is to identify
prices for goods such that each trader can exchange an initial allocation for an optimal one
that has the same total price, so that the total quantity of each good is conserved.
Under some fairly mild conditions, it can be shown that such
prices always exist. Intuitively, if all traders try to exchange their initial
allocations for ones that are optimal, and there is too much demand for good $j$ as a result,
then we can fix that problem by raising the price of $j$.

\begin{Def}
In an {\em Arrow-Debreu market}~\cite{AD}, each trader has an initial {\em endowment} consisting of
a bundle of goods. Suppose that the prices are set such that the following can happen:
each trader exchanges his endowment for an optimal bundle having the same total value, and
furthermore, the total quantity of each good is conserved. In that case, we say that these
prices allow the market to {\em clear}. It is shown in~\cite{AD} that there always exist
prices that allow the market to clear.

A {\em Fisher market} is a special case of an Arrow-Debreu market, in which traders are initially
endowed with quantities of money, and there are fixed quantities of goods for sale; in this
case the prices should ensure that if each trader buys a bundle of goods that is optimal
for his budget, then all goods are sold. This can be seen to be a special case of
Arrow-Debreu market, by regarding money as one of the goods in $G$. 
\end{Def}

The existence proof of market-clearing prices~\cite{AD} works by expressing a
market satisfying the relevant
constraints in terms of an {\em abstract economy}, a generalisation of the standard notion of a
game, and applying Nash's theorem. So, indirectly the proof uses Brouwer's fixed point theorem.
This indicates that \PPAD\ is the relevant complexity class for studying the hardness of
computing a price equilibrium.
Results about the computational complexity of finding market-clearing prices are in terms
of the types of utility functions of the traders.

Interesting classes of utility functions include the following (where $(x_1,\ldots,x_m)$
denotes a vector of quantities of $m$ goods):
\begin{enumerate}
\item Additively separable functions, where a trader's function $f^i(x_1,\ldots,x_m)$ is of
the form $f^i(x_1,\ldots,x_m)=\sum_{j=1}^m f^i_j(x_j)$. We still need to specify the
structure of the functions $f^i_j$ in order to have a well-defined problem.
\item piecewise linear: in conjunction with additively separable, this would require that each function
$f^i_j(x_j)$ be piecewise linear, and the syntactic complexity of such a function would be the
total number of pieces of the functions $f^i_j$. More generally, without the additively separable property,
$f^i$ could take the form
$f^i(x_1,\ldots,x_m)=\min_{a\in A} f^i_a(x_1,\ldots,x_m)$ where $A$ indexes a finite set of linear
functions, and $f^i_a(x_1,\ldots,x_m)=\sum_{j=1}^m\lambda^a_jx_j$ for non-negative
coefficients $\lambda^a_j$.
\item Leontiev economies: 
In a {\em Leontiev economy} we have $f^i(x_1,\ldots,x_m)=\min_{j\in[m]} \lambda^i_jx_j$, a special
case of the piecewise linear functions above; note that these functions are not however
additively separable.
\end{enumerate}

\begin{paragraph}{Polynomial-time algorithms}
Linear utility functions
take the form $f^i(x_1,\ldots,x_m)=\sum_{j=1}^m \lambda^i_jx_j$ --- they
are both additively separable and piecewise linear, but not Leontiev; they were initially
considered in~\cite{Gale}, and are known to be solvable in polynomial time~\cite{DPSV,Jain}
in the case of Fisher markets. For the Arrow-Debreu case, Devanur and Vazirani~\cite{DV03}
give a strongly polynomial-time approximation scheme but leave open the problem of
finding an exact one in polynomial time.
\end{paragraph}

\begin{paragraph}{PPAD-hardness results}
The first such result applied to Leontiev economies, for which there is a reduction
from 2-player games~\cite{CSVY}.
Chen et al.~\cite{CDDT} show that it is \PPAD-complete to compute an Arrow-Debreu market equilibrium
for the case of additively separable, piecewise linear and concave utility functions.
Chen and Teng~\cite{CT} show that it is \PPAD-hard to compute Fisher equilibrium prices,
from utility functions that are additively separable and piecewise linear concave; this
is done by reduction from {\sc Sparse bimatrix}~\cite{CDT-wine}.
Vazirani and Yannakakis~\cite{VY} show that finding an equilibrium in Fisher markets
is \PPAD-hard in the case of additively-separable, piecewise-linear, concave utility
functions. On the positive side, they show that with these utility functions, market
equilibria can however be written down with rational numbers. 
\end{paragraph}







\begin{thebibliography}{99} 

\bibitem{AKV}
\writer{T.G. Abbott, D.M. Kane and P. Valiant}
\paper{On the complexity of two-player win-lose games}
\proc{46th Symposium on Foundations of Computer Science}
\publish{IEEE Computer Society}{}
\pyear{2005}
\pages{113-122}

\bibitem{ABOV}
\writer{L. Addario-Berry, N. Olver and A. Vetta}
\paper{A polynomial time algorithm for finding Nash equilibria in planar win-lose games}
\jnl{J. Graph Algorithms Appl.}
\vol{11(1)}
\jyear{2007}
\pages{309-319}

\bibitem{AD}
\writer{K.J. Arrow and G. Debreu}
\paper{Existence of an Equilibrium for a Competitive Economy}
\jnl{Econometrica}
\vol{22(3)}
\jyear{1954}
\pages{265-290}

\bibitem{BCEIP}
\writer{P. Beame, S. Cook, J. Edmonds, R. Impagliazzo and T. Pitassi}
\paper{The relative complexity of NP search problems}
\proc{27th ACM Symposium on Theory of Computing}
\pyear{1995}
\pages{303-314}

\bibitem{BFHS}
\writer{F. Brandt, F. Fischer, P. Harrenstein and Y. Shoham}
\paper{Ranking Games}
\jnl{Artificial Intelligence}
\vol{173(3)}
\jyear{2009}
\pages{221-239}

\bibitem{BGR}
\writer{P. Briest, P.W. Goldberg and H. R\"oglin}
\paper{Approximate Equilibria in Games with Few Players}
\jnl{CoRR abs/0804.4524}
\jyear{2008}

\bibitem{Browder}
\writer{F.E. Browder}
\paper{On continuity of fixed points under deformations of continuous mappings}
\jnl{Summa Brasiliensis Math.}
\vol{4}
\jyear{1960}
\pages{183-191}

\bibitem{Bubelis}
\writer{V. Bubelis}
\paper{On equilibria in finite games}
\jnl{International Journal of Game Theory}
\vol{8}
\jyear{1979}
\pages{65-79}

\bibitem{CDDT}
\writer{X. Chen, D. Dai, Y. Du and S.-H. Teng}
\paper{Settling the Complexity of Arrow-Debreu Equilibria in Markets with Additively
Separable Utilities}
\proc{50th Symposium on Foundations of Computer Science}
\publish{IEEE Computer Society}{}
\pyear{2009}
\pages{273-282}

\bibitem{CD-icalp}
\writer{X. Chen and X. Deng}
\paper{On the complexity of 2D discrete fixed point problem}
\proc{33rd International Colloquium on Automata, Languages and Programming}
\pyear{2006}
\pages{489-500}

\bibitem{CD-focs}
\writer{X. Chen and X. Deng}
\paper{Settling the complexity of 2-player Nash Equilibrium}
\proc{47th Symposium on Foundations of Computer Science}
\publish{IEEE Computer Society}{}
\pyear{2006}
\pages{261-272}

\bibitem{CD-eccc}
\writer{X. Chen and X. Deng}
\paper{3-NASH is PPAD-Complete}
\jnl{Technical report TR05-134, Electronic Colloquium on Computational Complexity}
\jyear{2005}

\bibitem{CDT}
\writer{X. Chen, X. Deng and S.-H. Teng}
\paper{Settling the complexity of computing two-player Nash equilibria}
\jnl{Journal of the ACM}
\vol{56(3)}
\jyear{2009}
\pages{1-57}

\bibitem{CDT-wine}
\writer{X. Chen, X. Deng and S.-H. Teng}
\paper{Sparse Games are Hard}
\proc{2nd Workshop on Internet and Network Economics (WINE)}
\pyear{2006}
\pages{262-273}

\bibitem{CT}
\writer{X. Chen and S.-H. Teng}
\paper{Spending Is Not Easier Than Trading: On the Computational Equivalence of Fisher and
Arrow-Debreu Equilibria}
\proc{20th International Symposium on Algorithms and Computation}
\pyear{2009}
\pages{647-656}

\bibitem{CLR}
\writer{B. Codenotti, M. Leoncini and G. Resta}
\paper{Efficient computation of Nash equilibria for very sparse win-lose games}
\proc{14th European Symposium on Algorithms}
\pyear{2006}
\pages{232-243}

\bibitem{CSVY} 
\writer{B. Codenotti, A. Saberi, K. Varadarajan and Y. Ye}
\paper{Leontief economies encode nonzero sum two-player games}
\proc{17th Annual ACM-SIAM Symposium on Discrete Algorithms}
\pyear{2006}
\pages{659-667}

\bibitem{CS}
\writer{V. Conitzer and T. Sandholm}
\paper{Complexity results about Nash equilibria}
\proc{18th International Joint Conference on ArtiÞcial Intelligence}
\pyear{2003}
\pages{765-771}

\bibitem{CP}
\writer{P. Crescenzi and C.H. Papadimitriou}
\paper{Reversible Simulation of Space-Bounded Computations}
\jnl{Theoretical Computer Science}
\vol{143(1)}
\jyear{1995}
\pages{159-165}

\bibitem{DFP}
\writer{C. Daskalakis, A. Fabrikant, C.H. Papadimitriou}
\paper{The Game World Is Flat: The Complexity of Nash Equilibria in Succinct Games}
\proc{33rd International Colloquium on Automata, Languages and Programming}
\pyear{2006}
\pages{513-524}

\bibitem{DGP-stoc}
\writer{C. Daskalakis, P.W. Goldberg and C.H. Papadimitriou}
\paper{The complexity of computing a Nash equilibrium}
\proc{38th ACM Symposium on Theory of Computing}
\pyear{2006}
\pages{71-78}

\bibitem{DGP}
\writer{C. Daskalakis, P.W. Goldberg and C.H. Papadimitriou}
\paper{The Complexity of Computing a Nash Equilibrium}
\jnl{SIAM Journal on Computing}
\vol{39(1)}
\jyear{2009}
\pages{195-259}

\bibitem{cacmDGP}
\writer{C. Daskalakis, P.W. Goldberg and C.H. Papadimitriou}
\paper{The Complexity of Computing a Nash Equilibrium}
\jnl{Communications of the ACM}
\vol{52(2)}
\jyear{2009}
\pages{89-97}

\bibitem{DMP}
\writer{C. Daskalakis, A. Mehta and C.H. Papadimitriou}
\paper{A Note on Approximate Nash Equilibria}
\jnl{Theoretical Computer Science}
\vol{410(17)}
\jyear{2009}
\pages{1581-1588}

\bibitem{DP-eccc}
\writer{C. Daskalakis and C.H. Papadimitriou}
\paper{Three-player games are hard}
\jnl{Technical Report TR05-139, Electronic Colloquium on Computational Complexity}
\jyear{2005}
\pages{1-10}

\bibitem{DP08}
\writer{C. Daskalakis and C.H. Papadimitriou}
\paper{Discretized Multinomial Distributions and Nash Equilibria in Anonymous Games}
\proc{49th Symposium on Foundations of Computer Science}
\publish{IEEE Computer Society}{}
\pyear{2008}
\pages{25-34}

\bibitem{DP09}
\writer{C. Daskalakis and C.H. Papadimitriou}
\paper{On a Network Generalization of the Minmax Theorem}
\proc{36th International Colloquium on Automata, Languages and Programming}
\pyear{2009}
\pages{423-434}

\bibitem{DPSV}
\writer{N. Devanur, C.H. Papadimitriou, A. Saberi and V.V. Vazirani}
\paper{Market Equilibrium via a Primal-Dual-Type Algorithm for a Convex Program}
\jnl{Journal of the ACM}
\vol{55(5)}
\jyear{2008}
\pages{1-18}

\bibitem{DV03}
\writer{N.R. Devanur and V.V. Vazirani}
\paper{An improved approximation scheme for computing Arrow-Debreu prices for the linear case}
\proc{23rd Conference, Foundations of Software Technology and Theoretical Computer Science}
\pyear{2003}
\pages{149-155}

\bibitem{EGG}
\writer{E. Elkind, L.A. Goldberg and P.W. Goldberg}
\paper{Nash Equilibria in Graphical Games on Trees Revisited}
\proc{7th ACM Conference on Electronic Commerce}
\pyear{2006}
\pages{100-109}

\bibitem{EY}
\writer{K. Etessami and M. Yannakakis}
\paper{On the Complexity of Nash Equilibria and Other Fixed Points}
\jnl{SIAM Journal on Computing}
\vol{39(6)}
\jyear{2010}
\pages{2531-2597}

\bibitem{FT}
\writer{U. Feige and I. Talgam-Cohen}
\paper{A Direct Reduction from $k$-Player to 2-Player Approximate Nash}
\proc{3rd Symposium on Algorithmic Game Theory}
\publish{Springer LNCS 6386}{}
\pyear{2010}
\pages{138-149}

\bibitem{FIKU}
\writer{L. Fortnow, R, Impagliazzo, V. Kabinets and C. Umans}
\paper{On the Complexity of Succinct Zero-Sum Games}
\proc{20th IEEE Conference on Computational Complexity}
\pyear{2005}
\pages{323-332}

\bibitem{Gale}
\writer{D. Gale}
\book{Theory of Linear Economic Models}
\publish{McGraw Hill}{NY}
\byear{1960}

\bibitem{GZ}
\writer{I. Gilboa and E. Zemel}
\paper{Nash and correlated equilibria: Some complexity considerations}
\jnl{Games and Economic Behavior}
\vol{1}
\jyear{1989}
\pages{80-93}

\bibitem{GGKV}
\writer{L.A. Goldberg, P.W. Goldberg, P. Krysta and C. Ventre}
\paper{Ranking Games that have Competitiveness-based Strategies}
\proc{11th ACM Conference on Electronic Commerce}
\pyear{2010}
\pages{335-344}

\bibitem{GP}
\writer{P.W. Goldberg and C.H. Papadimitriou}
\paper{Reducibility Among Equilibrium Problems}
\proc{38th ACM Symposium on Theory of Computing}
\pyear{2006}
\pages{61-70}

\bibitem{GPS}
\writer{P.W. Goldberg, C.H. Papadimitriou and R. Savani}
\paper{The Complexity of the Homotopy Method, Equilibrium Selection, and Lemke-Howson Solutions}
\jnl{Arxiv technical report 1006.5352}
\jyear{2010}
\pages{1-22}

\bibitem{Harsanyi}
\writer{J.C. Harsanyi}
\paper{The tracing procedure: a Bayesian approach to deÞning a solution for n-person noncooperative games}
\jnl{International Journal of Game Theory}
\vol{4}
\jyear{1975}
\pages{61-95}

\bibitem{HRS}
\writer{S. H\'emon, M. de Rougemont and M. Santha}
\paper{Approximate Nash Equilibria for Multi-player Games}
\proc{1st Symposium on Algorithmic Game Theory}
\pyear{2008}
\pages{267-278}

\bibitem{Herings}
\writer{P.J-J. Herings}
\paper{Two simple proofs of the feasibility of the linear tracing procedure}
\jnl{Economic Theory}
\vol{15}
\jyear{2000}
\pages{485-490}

\bibitem{HvdE}
\writer{P.J-J. Herings and A. van den Elzen}
\paper{Computation of the Nash Equilibrium Selected by the Tracing Procedure in N-Person Games}
\jnl{Games and Economic Behavior}
\vol{38}
\jyear{2002}
\pages{89-117}

\bibitem{HP}
\writer{P.J-J Herings and R.J.A.P. Peeters}
\paper{A differentiable homotopy to compute Nash equilibria of n-person games}
\jnl{Economic Theory}
\vol{18(1)}
\jyear{2001}
\pages{159-185}

\bibitem{HP2}
\writer{P.J-J. Herings and R. Peeters}
\paper{Homotopy methods to compute equilibria in game theory}
\jnl{Economic Theory}
\vol{42(1)}
\jyear{2010}
\pages{119-156}

\bibitem{HV}
\writer{M.D. Hirsch, C.H. Papadimitriou and S. Vavasis}
\paper{Exponential Lower Bounds for Finding Brouwer Fixed Points}
\jnl{Journal of Complexity}
\vol{5(4)}
\jyear{1989}
\pages{379-416}

\bibitem{Jain}
\writer{K. Jain}
\paper{A polynomial-time algorithm for computing the Arrow-Debreu market equilibrium for
linear utilities}
\proc{45th Symposium on Foundations of Computer Science}
\publish{IEEE Computer Society}{}
\pyear{2004}
\pages{286-294}

\bibitem{KLS}
\writer{M. Kearns, M.L. Littman and S. Singh}
\paper{Graphical models for game theory}
\proc{17th Conference on Uncertainty in Artificial Intelligence}
\publish{Morgan Kaufmann}{}
\pyear{2001}
\pages{253-260}

\bibitem{LH}
\writer{C.E. Lemke and J.T. Howson, Jr.}
\paper{Equilibrium points of bimatrix games}
\jnl{SIAM J. Appl. Math.}
\vol{12(2)}
\jyear{1964}
\pages{413-423}

\bibitem{LMM}
\writer{R. Lipton, V. Markakis and A. Mehta}
\paper{Playing Large Games using Simple Strategies}
\proc{4th ACM Conference on Electronic Commerce}
\pyear{2003}
\pages{36-41}

\bibitem{LKS}
\writer{M. Littman, M. Kearns and S. Singh}
\paper{An efficient, exact algorithm for single connected graphical games}
\proc{15th Annual Conference on Neural Information Processing Systems}
\publish{MIT Press}{}
\pyear{2001}
\pages{817-823}

\bibitem{Meg}
\writer{N. Megiddo}
\paper{A note on the complexity of $P$-matrix LCP and computing an equilibrium}
\jnl{Res. Rep. RJ6439, IBM Almaden Research Center, San Jose.}
\jyear{1988}
\pages{1-5}

\bibitem{MS}
\writer{D. Monderer and L. Shapley}
\paper{Potential Games}
\jnl{Games and Economic Behavior}
\vol{14}
\jyear{1996}
\pages{124-143}

\bibitem{Nash}
\writer{J. Nash}
\paper{Noncooperative Games}
\jnl{Annals of Mathematics}
\vol{54}
\jyear{1951}
\pages{289-295}

\bibitem{vNM}
\writer{J. von Neumann and O. Morgenstern}
\book{The Theory of Games and Economic Behavior, second ed.}
\publish{Princeton University Press}{}
\byear{1947}

\bibitem{NRTV}
\writer{N. Nisan, T. Roughgarden, E. Tardos and V.V. Vazirani}
\book{Algorithmic Game Theory}
\publish{Cambridge University Press}{}
\byear{2007}

\bibitem{OR}
\writer{M.J. Osborne and A. Rubinstein}
\book{A Course in Game Theory}
\publish{MIT Press}{}
\byear{1994}

\bibitem{Pap}
\writer{C.H. Papadimitriou}
\paper{On the complexity of the parity argument and other inefficient proofs of existence}
\jnl{J. Comput. System Sci.}
\vol{48}
\jyear{1994}
\pages{498-532}

\bibitem{Renegar}
\writer{J. Renegar}
\paper{A faster PSPACE algorithm for deciding the existential theory of the reals}
\proc{29th Symposium on Foundations of Computer Science}
\publish{IEEE Computer Society}{}
\pyear{1988}
\pages{291-295}

\bibitem{Rosenthal}
\writer{ R.W. Rosenthal}
\paper{A Class of Games Possessing Pure-Strategy Nash Equilibria}
\jnl{International Journal of Game Theory}
\vol{2}
\jyear{1973}
\pages{65-67}

\bibitem{Rousseau}
\writer{J.-J. Rousseau}
\book{Discourse on Inequality}
\byear{1754}

\bibitem{SvS}
\writer{R. Savani and B. von Stengel}
\paper{Hard-to-Solve Bimatrix Games}
\jnl{Econometrica}
\vol{74(2)}
\jyear{2006}
\pages{397-429}

\bibitem{Scarf}
\writer{H. Scarf}
\paper{The approximation of fixed points of a continuous mapping}
\jnl{SIAM J. Appl. Math.}
\vol{15}
\jyear{1967}
\pages{1328-1343}

\bibitem{SV}
\writer{G. Schoenebeck and S. Vadhan}
\paper{The Computational Complexity of Nash Equilibria in Concisely Represented Games}
\proc{7th ACM Conference on Electronic Commerce}
\pyear{2006}
\pages{270-279}

\bibitem{Sperner}
\writer{E. Sperner}
\paper{Neuer Beweis f\"ur die Invarianz der Dimensionszahl und des Ge\-bietes}
\jnl{Abhandlungen aus dem Mathematischen Seminar Universit\"at Hamburg}
\vol{6}
\jyear{1928}
\pages{265-272}

\bibitem{BvS}
\writer{B. von Stengel}
\book{Computing equilibria for two-person games. Chapter 45, Handbook of Game Theory, Vol 3}
\edits{R.J. Aumann and S. Hart}
\publish{North-Holland}{Amsterdam}
\bye